\newcounter{mnotecount}[section]
\renewcommand{\themnotecount}{\thesection.\arabic{mnotecount}}
\newcommand{\mnote}[1]
{\protect{\stepcounter{mnotecount}}$^{\mbox{\footnotesize $%
\!\!\!\!\!\!\,\bullet$\themnotecount}}$ \marginpar{
\raggedright\tiny\em $\!\!\!\!\!\!\,\bullet$\themnotecount: #1} }
\theoremstyle{definition}
\newtheorem{thm}{Theorem}
\newtheorem{lem}{Lemma}
\newtheorem{prop}{Proposition}
\newtheorem{Remark}{Remark}
\newcommand{\metric}{{\bf g}}
\title{The problem of a self-gravitating scalar field with positive cosmological constant}
\author{Jo\~ao L. Costa$^{(1,3)}$, Artur Alho$^{(2)}$ and Jos\'e Nat\'ario$^{(3)}$\\\\
{\small $^{(1)}$Instituto Universitário de Lisboa (ISCTE-IUL), Lisboa, Portugal}\\
{\small $^{(2)}$Centro de Matem\'atica, Universidade do Minho, Gualtar, 4710-057 Braga, Portugal}\\
{\small $^{(3)}$Centro de An\'alise Matem\'atica, Geometria e Sistemas Din\^amicos,}\\
{\small Instituto Superior T\'ecnico, Universidade T\'ecnica de Lisboa, Portugal}
}
\begin{document}

\maketitle

\begin{abstract}
We study the Einstein-scalar field system with positive cosmological constant and spherically symmetric characteristic initial data given on a truncated null cone. We prove well-posedness, global existence  and exponential decay in (Bondi) time, for small data. From this, it follows that initial data close enough to de Sitter data evolves to a causally geodesically complete spacetime (with boundary), which approaches a region of de Sitter asymptotically at an exponential rate; this is a non-linear stability result for de Sitter within the class under consideration, as well as a realization of the cosmic no-hair conjecture.


\end{abstract}

\section{Introduction}

The introduction of a positive cosmological constant $\Lambda$ into the Einstein field equations allows one to model inflation periods (large $\Lambda$) as well as the ``recent" period of accelerated expansion (small $\Lambda$), and consequently plays a central role in modern cosmology. This adds to the relevance of studying initial value problems for the Einstein-matter field equations with positive cosmological constant. For such problems a general framework is provided by the {\em cosmic no-hair conjecture}, which states that generic expanding solutions of Einstein's field equations with a positive cosmological constant approach the de Sitter solution asymptotically. This conjecture as been proved for a variety of matter models and/or symmetry conditions~\cite{Friedrich:1986,Wald:1983,Rendall:2003,Tchapnda:2003,Tchapnda:2005,Ringstrom:2008,Rodnianski:2009,Beyer:2009c,Speck:2011}, but the complexity of the issue makes a general result unattainable in the near future.\footnote{For instance, either by symmetry conditions or smallness assumptions on the initial data the formation of (cosmological) black holes is excluded in all the referred results.}

Here we will consider the spherically symmetric Einstein-scalar field system with positive cosmological constant. This is the simplest, non-pathological matter model with dynamical degrees of freedom in spherical symmetry. By this we mean the following: in spherical symmetry, Birkhoff's theorem completely determines the local structure of electro-vacuum spacetimes, leaving no dynamical degrees of freedom; on the other hand, dust, for instance, is known to develop singularities even in the absence of gravity, i.e.~in a fixed Minkowski background, and consequently is deemed pathological.\footnote{It should be noted that the presence of a positive cosmological constant may counteract the tendency of dust to form singularities.} The self-gravitating scalar field appears then as an appropriate model to study gravitational collapse. This is in fact the original motivation behind the monumental body of work developed by Christodoulou concerning self-gravitating scalar fields with vanishing cosmological constant,\footnote{See the introduction to~\cite{Christodoulou:2008} for a thorough review of Christodoulou's results on spherically symmetric self-gravitating scalar fields.} and it is inspired by these achievements that we proceed to the positive $\Lambda$ case\footnote{Christodoulou's work has also inspired a considerable amount of numerical work, including Choptuik's discovery of critical phenomena \cite{Choptuik:1993} (see also \cite{Gundlach07} and references therein). The case $\Lambda > 0$ seems to be less explored numerically, see however \cite{Brady:1997, Beyer:2009b}.}.

We modify the framework developed in~\cite{Christodoulou:1986} to accommodate the presence of a cosmological constant, thus reducing the full content of the Einstein-scalar field system to a single integro-differential evolution equation. It is then natural, given both the structure of the equation and the domain of the Bondi coordinate system where the reduction is carried out, to consider a characteristic initial value problem by taking initial data on a truncated null cone.

For such an initial value problem we prove well posedness, global existence  and exponential decay in (Bondi) time, for small data. From this, it follows that initial data close enough to de Sitter data evolves, according to the system under consideration, to a causally geodesically complete spacetime (with boundary), which approaches a region of de Sitter asymptotically at an exponential rate; this is a non-linear stability result for de Sitter within the class under consideration and can be seen as a realization of the cosmic no-hair conjecture\footnote{Albeit in a limited sense, since our coordinates do not reach the whole of future infinity (see Figure~\ref{Penrose}). A precise statement of the cosmic no-hair conjecture can be found in \cite{Beyer:2009a}, where it is shown that it follows from the existence of a smooth conformal future boundary}. Also, we  note that the  exponential decay rate obtained, $\lesssim e^{-Hu}$, with  $H=2\sqrt{{\Lambda}/{3}}$, is expected to be sharp\footnote{Although our retarded time coordinate $u$ in \eqref{metricBondi} is different from the standard time coordinate $t$, it coincides with $t$ along the center $r=0$, and hence is close to $t$ in our $r$-bounded domain, thus giving the same exponential decay. For instance, in de Sitter spacetime we have $u=t-\sqrt{3/\Lambda}\,\log\left(1+\sqrt{\Lambda/3}\,r\right)$.} \cite{Rendall:2003}. Moreover, an interesting side effect of the proof of our main results is the generalization, to this non-linear setting, of boundedness of the supremum norm of the scalar field in terms of its initial characteristic data. We refer to Theorem~\ref{mainThm} for a compilation of the main results of this paper.

As was already clear from the study of the uncoupled case~\cite{Costa:2012}, the presence of a positive cosmological constant increases the difficulty of the problem at hand considerably. In fact, a global solution for the zero cosmological constant case was obtained in~\cite{Christodoulou:1986} by constructing a sequence of functions which, for an appropriate choice of Banach space, was a contraction in the full domain; such direct strategy does not work (at least for analogous choices of function spaces) when a positive cosmological constant is considered, since a global contraction is no longer available even in the uncoupled case. Moreover, new difficulties appear in the non-linear problem when passing from zero to a positive cosmological constant: first of all, the incoming light rays (characteristics), whose behavior obviously depends of the unknown, bifurcates into three distinct families, with different, sometimes divergent, asymptotics;\footnote{The use of double null coordinates $(u,v)$, also introduced by Christodoulou for the study of the Einstein-scalar field equations in~\cite{Christodoulou:1991}, would facilitate the handling of the characteristics, which in such coordinates take the form $v=const.$, but in doing so we are no longer able to reduce the full system to a single scalar equation.} this is in contrast with the $\Lambda=0$ case, where all the characteristics approach the center of symmetry at a similar rate. Also, for a vanishing cosmological constant the coefficient of the integral term of the equation decays radially, which is of crucial importance in solving the problem; on the contrary, for $\Lambda>0$ such coefficient grows linearly with the radial coordinate.

To overcome these difficulties we were forced to differ from Christodoulou's original strategy considerably. The cornerstone of our analysis is  a remarkable a priori estimate, the aforementioned result of boundedness in terms of initial data, whose inspiration comes from the uncoupled case~\cite{Costa:2012}. We can then establish a local existence result with estimates for the solution and its radial derivative solely in terms of initial data and constants not depending on the time of existence, which allows us to extend a given local solution indefinitely. The decay results, which in the vanishing cosmological setting are an immediate consequence of the choice of function spaces and the existence of the already mentioned global contraction, here follow by establishing ``energy inequalities'', where the ``energy function'' is given by the supremum norm of the radial derivative of the unknown~\eqref{energyDef}.

To make this strategy work we were forced to restrict our analysis to a finite range of the radial coordinate; one should note nonetheless, that although finite, the results here hold for arbitrarily large radial domains. At a first glance one would expect the need to impose boundary conditions at $r=R$, for $R$ the maximal radius; this turns out to be unnecessary, since for sufficiently large radius the radial coordinate of the characteristics becomes an increasing function of time, and consequently the data at the boundary $r=R$ is completely determined by the initial data (see Figure~\ref{Charact}). This situations parallels that of~\cite{Ringstrom:2008}, where local information in space (here, in a light cone)  allows to obtain global information in time.

A natural consequence of the introduction of a positive cosmological constant is the appearance of a cosmological horizon. In fact, although the small data assumptions do not allow the formation of a black hole event horizon, a cosmological apparent horizon is present from the start, and a cosmological horizon formed; this is of course related to the difficulties mentioned above concerning the dynamics of the characteristics.

\subsection{Previous results}

A discussion of related results in the literature is in order.

The first non-linear stability result for the Einstein equations, without symmetry assumptions, was the non-linear stability of de Sitter spacetime, within the class of solution of the vacuum Einstein equations with positive cosmological constant, obtained in the celebrated work of Friedrich~\cite{Friedrich:1986}.\footnote{This was later generalized to $n+1$ dimensions, $n$ odd, by Anderson~\cite{Anderson:2004}.} This result is based on the conformal method, developed by Friedrich, which avoids the difficulties of establishing global existence of solutions to a system of non-linear hyperbolic differential equations, but seems to be difficult to generalize to Einstein-matter systems. A new, more flexible and PDE oriented approach was recently developed by Ringstr\"om~\cite{Ringstrom:2008} to obtain exponential decay for non-linear perturbations of locally de Sitter cosmological models in the context of the Einstein-nonlinear scalar field system with a positive potential; these far-reaching results have a wide variety of cosmological applications, but break down exactly in the situation covered here, since they assume that the potential $V$ satisfies $V''(0)>0$ (and so cannot be constant).

In the meantime, based on Ringstr\"om's breakthrough, Rodnianski and Speck~\cite{Rodnianski:2009}, and later Speck~\cite{Speck:2011}, proved non-linear stability of FLRW solution with flat toroidal space within the Einstein-Euler system satisfying the equation of state $p=c_s\rho$, $0<c_s<1/3$; exponential decay of solutions close to the flat FLRW was also established therein. In the same context, by generalizing Friedrich's conformal method to pure radiation matter models, L\"ubbe and Kroon~\cite{Lubbe:2011} were able to extend Rodnianski and Speck's non-linear stability result to the pure radiation fluids case, $c_s=1/3$.

\subsection{Main results}

Our main results may be summarized in the following:
\begin{thm}
\label{mainThm}
Let $\Lambda>0$ and $R>\sqrt{3/\Lambda}$. There exists $\epsilon_0>0$, depending on $\Lambda$ and $R$, such that for  $\phi_0\in{\mathcal C}^{k+1}([0,R])$ ($k \geq 1$) satisfying
$$\sup_{0\leq r\leq R }|\phi_0(r)|+\sup_{0\leq r\leq R }|\partial_r\phi_0(r)|<\epsilon_0\;,$$
there exists a unique Bondi-spherically symmetric ${\mathcal C}^k$ solution\footnote{See Section~\ref{Section2} for the precise meaning of a ${\mathcal C}^{1}$ solution of the Einstein-$\Lambda$-scalar field system in Bondi-spherical symmetry.} $(M,\metric,\phi)$ of the Einstein-$\Lambda$-scalar field system~\eqref{fieldEq}, with the scalar field $\phi$ satisfying the characteristic condition
$$\phi_{|_{u=0}}=\phi_0\;.$$
The Bondi coordinates for $M$ have range
$[0,+\infty)\times[0,R]\times \mathbb{S}^2$,
and the metric takes the form~\eqref{metricBondi}. Moreover, we have the following bound in terms of initial data:
\[
\left|\phi\right| \leq \sup_{0\leq r\leq R }\left|\partial_r\left(r\phi_0(r)\right)\right|\;.
\]
Regarding the asymptotics, there exists $\underline{\phi}\in\mathbb{R}$ such that
\[
\label{phiDecay}
\left|\phi(u,r)-\underline{\phi}\right|\lesssim e^{-Hu}\;,
\]
and
\[
\label{metricDecay}
\left|\metric_{\mu\nu}-\mathring{\metric}_{\mu\nu} \right|\lesssim e^{-Hu}\;,
\]
where $H:=2\sqrt{\Lambda/3}$ and $\mathring{\metric}$ is de Sitter's metric in Bondi coordinates, as given in~\eqref{dSBondi}. Finally, the spacetime $(M,\metric)$ is causally geodesically complete towards the future\footnote{A manifold with boundary is geodesically complete towards the future if the only geodesics which cannot be continued for all values of the affine parameter are those with endpoints on the boundary.} and has vanishing final Bondi mass\footnote{See Section~\ref{sec:Bondimass} for the definition of the final Bondi mass in this context.}.
\end{thm}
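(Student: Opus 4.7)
The overall plan is to adapt Christodoulou's integro-differential reduction to the $\Lambda>0$ setting, solve the resulting scalar equation locally, bootstrap to global existence via an a priori sup-norm bound controlled solely by initial data, and finally extract exponential decay through an energy-type argument on the radial derivative. Using the Bondi ansatz \eqref{metricBondi}, the Einstein equations along each cone $u=\textup{const}$ reduce to radial ODEs for the metric coefficients with $\phi$ and $(\partial_r\phi)^2$ as source; integrating these expresses the coefficients as explicit functionals of $\phi$, modulo $\Lambda$ corrections. Substituting back into $\Box_\metric\phi=0$ yields a single scalar integro-differential evolution equation for the auxiliary unknown $h:=\partial_r(r\phi)$, whose retarded-time derivative is along the incoming characteristics. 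Local existence on $[0,u_\ast]\times[0,R]$ with control of $\sup|\phi|+\sup|\partial_r\phi|$ then follows by a contraction argument in a suitable Banach space, for $u_\ast$ small.

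The cornerstone is the a priori bound $|\phi(u,r)|\leq\sup_{[0,R]}|\partial_r(r\phi_0)|$. I would prove it by integrating along the incoming characteristics and exploiting the sign structure of the reduced equation to show that $\sup_r |h(u,r)|$ is non-increasing in $u$; since $\phi(u,r)=\frac{1}{r}\int_0^r h(u,s)\,ds$ for $r>0$ and $\phi(u,0)=h(u,0)$, the bound on $h$ transfers to $\phi$. A parallel characteristic argument, this time invoking the smallness of the data to absorb quadratic error terms, produces a $u$-independent bound on $\partial_r\phi$. Together these estimates promote the local solution to a global one via a standard continuation argument. The hypothesis $R>\sqrt{3/\Lambda}$ enters crucially here: for $r$ sufficiently close to $R$, the radial coordinate along the incoming characteristics is $u$-increasing, so $\{r=R\}$ is not in past causal contact with interior points and no boundary data there is required.

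For decay I would introduce the energy $\mathcal{E}(u):=\sup_{r\in[0,R]}|\partial_r\phi(u,r)|$ as in \eqref{energyDef}. Differentiating the reduced equation in $r$ and integrating along characteristics, the linear part should contribute a term of order $-H\,\mathcal{E}(u)$ with $H=2\sqrt{\Lambda/3}$, while the nonlinear self-interaction and metric-coupling terms are controlled by $C\,\mathcal{E}(u)^2$ once smallness is invoked. Gr\"onwall then delivers $\mathcal{E}(u)\lesssim e^{-Hu}$, and radial integration, combined with the observation that $\phi(u,0)$ converges to some $\underline\phi$ at the same exponential rate, yields \eqref{phiDecay}. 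The metric decay \eqref{metricDecay} follows from the explicit representations of the metric coefficients as functionals of $\phi$ and $\partial_r\phi$ produced at the reduction step, compared against the de Sitter expressions \eqref{dSBondi}. Causal geodesic completeness is then obtained by analyzing the geodesic equations in Bondi coordinates: exponential closeness to the de Sitter patch, which is complete in the sense of the footnote, forces every future-directed causal geodesic either to exit through $\{r=R\}$ or to have infinite affine length. Finally, the vanishing of the final Bondi mass follows from the decay of the Hawking mass function, once the definition of Section~\ref{sec:Bondimass} is in hand.

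The main obstacle, I expect, is establishing the a priori sup-norm bound and the energy inequality with the sharp rate $H=2\sqrt{\Lambda/3}$. Unlike the $\Lambda=0$ case, the coefficient of the integral term grows linearly in $r$ rather than decaying, so one must rely on a delicate sign analysis along the characteristics to close the estimates. Additionally, the splitting of the incoming characteristics into three families with divergent asymptotics requires careful bookkeeping to guarantee that the supremum of $|\partial_r\phi|$ is always attained along a family for which the dissipative mechanism dominates.
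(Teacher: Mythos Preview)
Your outline matches the paper's strategy: reduction to the single integro-differential equation for $h=\partial_r(r\phi)$ (Proposition~\ref{propReduction}), local existence by contraction, the a priori bound $\|h(u,\cdot)\|_{C^0}\le\|h_0\|_{C^0}$ from the sign $G\le 0$ along characteristics, continuation to global existence, and a Gr\"onwall-type decay argument. Three points, however, need correcting before this closes.

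First, the energy in \eqref{energyDef} is $\mathcal{E}(u)=\sup_r|\partial_r h(u,r)|$, not $\sup_r|\partial_r\phi|$; it is the commuted equation $D\partial_r h-2G\,\partial_r h=-J\,\partial_r\bar h$ (equation~\eqref{D_partial_h}) that one integrates, and the key smallness $|J|=O(x^*)\,r$ of its source term (Lemma~\ref{Lemma1}) is what makes the nonlinear error absorbable. Second, the sign condition $G\le 0$ on which the $C^0$ a priori bound rests is \emph{not} structural: it is itself a consequence of smallness of $\|h\|_{X}$ (Lemma~\ref{Lemma1}), so you cannot establish the $C^0$ bound before the $X$ bound as your ordering suggests. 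The paper resolves this circularity by running both estimates simultaneously along the iterative sequence of Lemma~\ref{Lemma2}, propagating $G_n\le 0$, $\|h_n\|_{C^0}\le\|h_0\|_{C^0}$, and $\|h_n\|_X\le(1+C^*)\|h_0\|_X$ together by induction on $n$. Third, a single Gr\"onwall does not deliver the sharp rate $H$: the effective rate one extracts is $\hat H(u_0)=H+O(E(u_0))$, which for fixed small data is merely close to $H$. The paper first secures some positive rate $\mathring H$, feeds the resulting decay $E(u_0)\lesssim e^{-\mathring H u_0}$ back into $\hat H$, and then sets $u_0=u/2$ to upgrade to $e^{-Hu}$ (Theorem~\ref{thmDecay}).
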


This result is an immediate consequence of Proposition~\ref{propReduction} and Theorems~\ref{thmGlobal} and~\ref{thmDecay}. Note that, as is the case with the characteristic initial value problem for the wave equation, only $\phi$ needs to be specified on the initial characteristic hypersurface\footnote{Notice however that uniqueness is not expected to hold towards the past.} (as opposed to, say, $\phi$ and $\partial_u\phi$). There is no initial data for the metric functions, whose initial data is fixed by the choice of $\phi_0$. A related issue that may cause confusion is that the vanishing of $\partial_r\phi_0(0)$ is not required to ensure regularity at the center: in fact, the precise condition for $\phi$ to be regular at the center is $\partial_u \phi(u,0) = \partial_r \phi(u,0)$, which is an automatic consequence of the wave equation \eqref{wave}. The reader unfamiliar with these facts should note that, for example, the unique solution of the spherically symmetric wave equation in Minkowski spacetime, $\partial^2_t(r\phi)-\partial^2_r(r\phi)=0$, with initial data $\phi(r,r)=r$, is the smooth function $\phi(t,r)=t$ for $t>r$. 

\section{Einstein-$\Lambda$-scalar field system in Bondi coordinates}\label{Section2}

We will say that a spacetime $(M, \metric)$ is {\em Bondi-spherically symmetric} if it admits a global representation for the metric of the form
\begin{equation}
\label{metricBondi}
 \metric=-g(u,r)\tilde{g}(u,r)du^{2}-2g(u,r)dudr+r^{2}d\Omega^2\;,
\end{equation}
where
$$d\Omega^2=d\theta^2+\sin^2\theta d\varphi^2\;,$$
is the round metric of the two-sphere, and
$$(u,r)\in[0,U)\times[0,R)\;\;,\;\;\; U,R\in\mathbb{R}^+\cup\{+\infty\}\;.$$
If $U$ or $R$ are finite these intervals can also be closed, which corresponds to adding a final light cone $\{u=U\}$ or a cylinder $[0,U) \times \mathbb{S}^2$ as a boundary, in addition to the initial light cone $\{u=0\}$; the metric is assumed to be regular at the center $\{r=0\}$, which is not a boundary. 

The coordinates $(u,r,\theta,\varphi)$ will be called {\em Bondi coordinates}. For instance, the causal future of any point in de Sitter spacetime may be covered by Bondi coordinates with the metric given by
\begin{equation}
 \mathring{\metric}=-\left(1-\frac{\Lambda}{3}r^{2}\right)du^{2}-2dudr+r^{2}d\Omega^{2}\;
\label{dSBondi}
\end{equation}
(see Figure~\ref{Penrose}). Note that this coordinate system does not cover the full de Sitter manifold (which strictly speaking is not Bondi-spherically symmetric), unlike in the asymptotically flat $\Lambda=0$ case. The boundary of the region covered by Bondi coordinates, that is, the surface $u=-\infty$ (which for $\Lambda=0$ would correspond to past null infinity), is an embedded null hypersurface (the cosmological horizon of the observer antipodal to the one at $r=0$). Moreover, the lines of constant $r$, which for $\Lambda=0$ approach timelike geodesics as $r\to+\infty$, here become spacelike for sufficiently large $r$.

\begin{figure}[h!]
\begin{center}
\psfrag{u=+i}{$u=+\infty$}
\psfrag{u=-i}{$u=-\infty$}
\psfrag{u=0}{$u=0$}
\psfrag{H}{$\mathcal{H}$}
\psfrag{H'}{$\mathcal{H'}$}
\psfrag{I+}{$\mathscr{I^+}$}
\epsfxsize=.5\textwidth
\leavevmode
\epsfbox{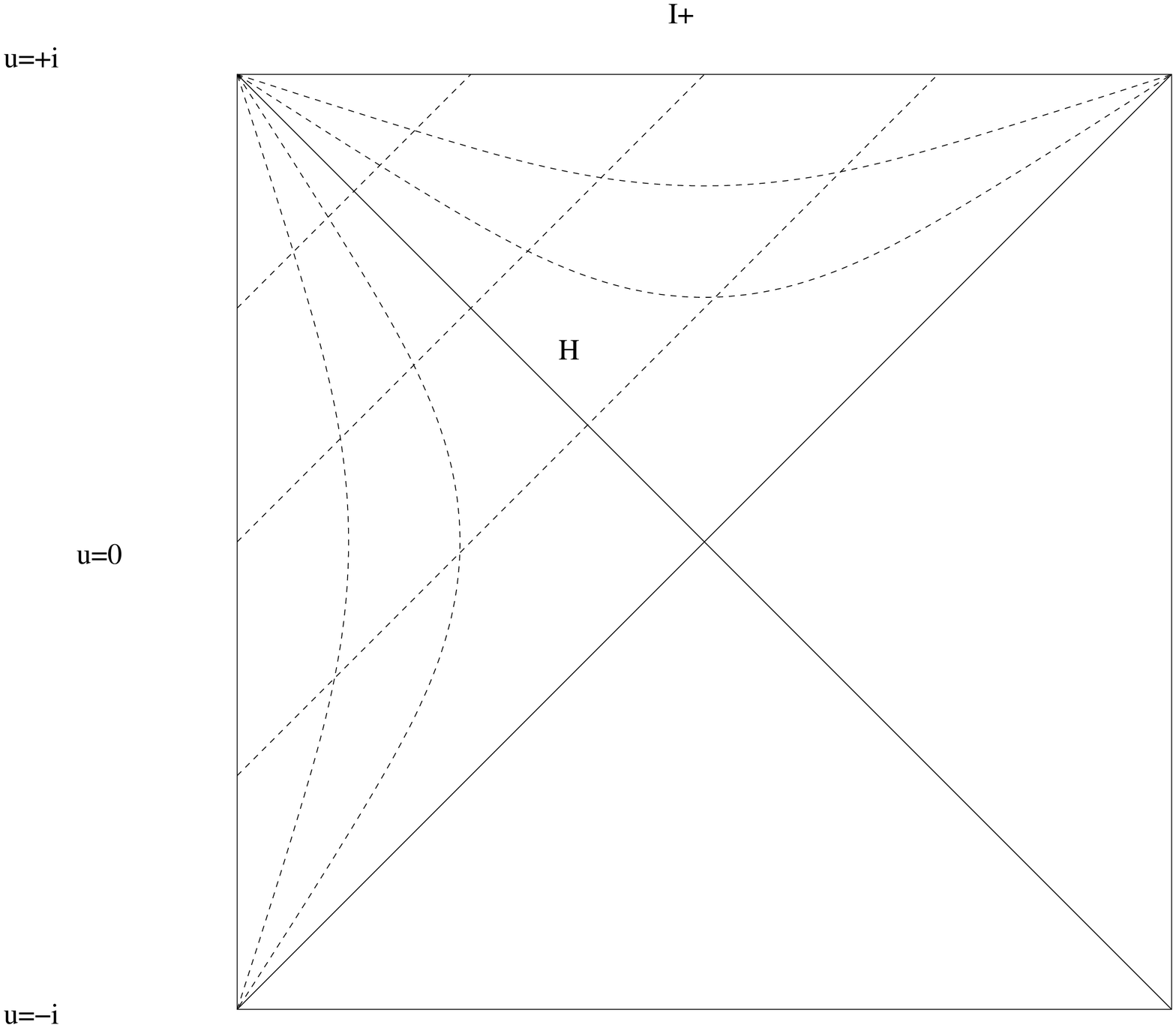}
\end{center}
\caption{Penrose diagram of de Sitter spacetime. The dashed lines $u=\text{constant}$ are the future null cones of points at $r=0$. The cosmological horizon $\mathcal{H}$ corresponds to $r=\sqrt{\frac3\Lambda}$ and future infinity $\mathscr{I^+}$ to $r=+\infty$.} \label{Penrose}
\end{figure}

Although the causal structures of Minkowski and de Sitter spacetimes are quite different, the existence of Bondi coordinates depends solely on certain common symmetries. More precisely, a global representation for the metric of the form~\eqref{metricBondi} can be derived from the following geometrical hypotheses:
\begin{enumerate}
\item[(i)] the spacetime admits a $SO(3)$ action by isometries, whose orbits are either fixed points or $2$-spheres;
\item[(ii)] the orbit space $Q=M/SO(3)$ is a 2-dimensional Lorentzian manifold with boundary, corresponding to the sets of fixed and boundary points in $M$; 
\item[(iii)] the set of fixed points is a timelike curve (necessarily a geodesic), and any point in $M$ is on the future null cone of some fixed point;
\item[(iv)] the {\em radius function}, defined by $r(p):=\sqrt{\text{Area}({\mathcal O}_p)/4\pi}$ (where ${\mathcal O}_p$ is the orbit through $p$), is monotonically increasing along the generators of these future null cones.\footnote{These two last assumptions exclude the Nariai solution, for instance, from our analysis.}
\end{enumerate}

\subsection{The field equations}

The Einstein field equations with a cosmological constant $\Lambda$ are
\begin{equation}
R_{\mu\nu}-\frac{1}{2}R g_{\mu\nu}+\Lambda g_{\mu\nu}=\kappa T_{\mu\nu}\;,
\label{EFEs}
\end{equation}
where $R_{\mu\nu}$ is the Ricci curvature of $\metric$, $R$ the associated scalar curvature and ${T}_{\mu\nu}$ the energy-momentum tensor.
For a (massless) scalar field $\phi$ the energy-momentum tensor is given by
\[
T_{\mu\nu}=\partial_{\mu}\phi\,\partial_{\nu}\phi -\frac{1}{2}\,g_{\mu\nu}\,g^{\alpha\beta}\left(\partial_{\alpha}\phi\right)\left(\partial_{\beta}\phi\right)\;,
\]
and then~\eqref{EFEs} becomes
\begin{equation}
\label{fieldEq}
 R_{\mu\nu}=\kappa\, \partial_{\mu}\phi\,\partial_{\nu}\phi+\Lambda g_{\mu\nu}\;.
\end{equation}
These equations are written for a spacetime metric of the form~\eqref{metricBondi} and a spherically symmetric scalar field in the Appendix. As shown in \cite{Christodoulou:1986}, their full content is encoded in the following three equations: the $rr$ component of the field equations,
\begin{equation}
\qquad\frac{2}{r}\frac{1}{g}\frac{\partial g}{\partial r}=\kappa\left(\partial_{r}\phi\right)^{2}\;;
\label{Einstein_rr}
\end{equation}
the $\theta\theta$ component of the field equations,
\begin{equation}
\frac{\partial}{\partial r}(r\tilde{g})=g\left(1-\Lambda r^{2}\right)\;;
\label{Einstein_theta_theta}
\end{equation}
and the wave equation for the scalar field, 
\begin{equation}
\nabla^{\mu}T_{\mu\nu}=0 \Leftrightarrow \nabla^{\mu}\partial_{\mu}\phi=0\;,
\label{wave}
\end{equation}
which reads
\begin{equation}
\frac{1}{r}\left[\frac{\partial}{\partial u}-\frac{\tilde{g}}{2}\frac{\partial}{\partial r}\right]\frac{\partial}{\partial r}\left(r\phi\right)=\frac{1}{2}\left(\frac{\partial\tilde{g}}{\partial r}\right)\left(\frac{\partial\phi}{\partial r}\right)\;.
\label{WaveEquation}
\end{equation}

\subsection{Christodoulou's framework for spherical waves}

Integrating~\eqref{Einstein_rr} with initial condition
\[
g(u,r=0)=1
\]
(so that we label the future null cones by the proper time of the free-falling observer at the center\footnote{This differs from Christodoulou's original choice, which was to use the proper time of observers at infinity.}) yields
\begin{equation}
 g=e^{\frac{\kappa}{2}\int^{r}_{0}s(\partial_{s}\phi)^{2}ds}.
\label{gtermsPhi}
\end{equation}

Given any continuous function $f=f(u,r)$ we define its {\em average function} by
\begin{equation}
\label{meanDef}
\bar{f}(u,r):=\frac{1}{r}\int^{r}_{0}f\left(u,s\right)ds\;,
\end{equation}
for which the following identity holds:
\begin{equation}
\label{meanDif}
\frac{\partial\bar{f}}{\partial r}=\frac{f-\bar{f}}{r}\;.
\end{equation}
Using the regularity condition
\[
\lim_{r\rightarrow0}r\tilde{g}=0\;,
\]
implicit in our definition of Bondi-spherically symmetric spacetime, we obtain by integrating~\eqref{Einstein_theta_theta}:
\begin{equation}
 \tilde{g}=\frac{1}{r}\int^{r}_{0}g\left(1-\Lambda s^{2}\right)ds=\overline{g\left(1-\Lambda r^{2}\right)}=\bar{g}-\frac{\Lambda}{r}\int^{r}_{0}gs^{2}ds\;.
\label{barg_terms_g}
\end{equation}
Following~\cite{Christodoulou:1986} we introduce
\[
h:={\partial_r}\left(r\phi\right)\;.
\]
Assuming $\phi$ continuous, which implies
\[
\lim_{r\rightarrow0}r\phi=0\;,
\]
we have
\begin{equation}
 \phi=\frac{1}{r}\int^{r}_{0}h\left(u,s\right)ds=\bar{h}\quad\text{and}\quad\frac{\partial\phi}{\partial r}=\frac{\partial\bar{h}}{\partial r}=\frac{h-\bar{h}}{r}\;,
\label{Definitionh}
\end{equation}
and so \eqref{gtermsPhi} reads
\begin{equation}
\label{g}
g(u,r)=\exp\left({\frac{\kappa}{2}\int^{r}_{0}\frac{\left(h-\bar{h}\right)^{2}}{s}ds}\right)\;.
\end{equation}

Now, defining the differential operator
\[
D:=\frac{\partial}{\partial u}-\frac{\tilde{g}}{2}\frac{\partial}{\partial r}\;,
\]
whose integral lines are the incoming light rays (with respect to the observer at the center $r=0$), and using \eqref{Definitionh} together with \eqref{Einstein_theta_theta}, the wave-equation \eqref{WaveEquation} is rewritten as the integro-differential equation
\begin{equation}
\label{mainEqOriginal} 
 Dh=G\left(h-\bar{h}\right)\;,
\end{equation}
where we have set
\begin{eqnarray}
\label{defG}
G&:=&\frac{1}{2}\partial_r\tilde{g} \\
\label{defG2}
&=&\frac{1}{2r}\left[(1-\Lambda r^2)g-\overline{(1-\Lambda r^2)g}\right] \\
\label{defG3}
&=&\frac{\left(g-\bar{g}\right)}{2r}+\frac{\Lambda}{2r^{2}}\int^{r}_{0}gs^{2}ds-\frac{\Lambda}{2}rg\;.
\end{eqnarray}

Thus we have derived the following:
\begin{prop}
\label{propReduction}
For Bondi-spherically symmetric spacetimes~\eqref{metricBondi}, the Einstein-scalar field system with cosmological constant~\eqref{fieldEq} is equivalent to the integro-differential equation~\eqref{mainEqOriginal}, together with \eqref{barg_terms_g}, \eqref{Definitionh}, \eqref{g} and \eqref{defG}.
\end{prop}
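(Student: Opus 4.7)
The plan is to establish a bidirectional equivalence between the full Einstein--$\Lambda$-scalar field system \eqref{fieldEq} under the Bondi ansatz \eqref{metricBondi} and the integro-differential equation \eqref{mainEqOriginal} together with its companion relations \eqref{barg_terms_g}, \eqref{Definitionh}, \eqref{g} and \eqref{defG}. The strategy is to follow the reduction scheme developed by Christodoulou in \cite{Christodoulou:1986}, modified to accommodate the cosmological constant, and split the argument into a ``forward'' reduction of the Einstein system and a ``backward'' reconstruction from $h$.

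For the forward direction, I would first invoke the componentwise expression for the Ricci tensor of the metric \eqref{metricBondi} (relegated to the Appendix) and substitute into \eqref{fieldEq}. The essential point, quoted from \cite{Christodoulou:1986}, is that the entire content of the field equations is captured by the three scalar equations \eqref{Einstein_rr}, \eqref{Einstein_theta_theta} and \eqref{WaveEquation}, the remaining components being consequences via the twice-contracted Bianchi identity together with regularity at the axis $r=0$. Granted this, the derivation carried out in the main text is a chain of direct computations: integrating \eqref{Einstein_rr} with the gauge choice $g(u,0)=1$ yields \eqref{gtermsPhi}; introducing $h := \partial_r(r\phi)$ and invoking continuity of $\phi$ at the center produces \eqref{Definitionh}, and hence \eqref{g}; integrating \eqref{Einstein_theta_theta} with the boundary condition $\lim_{r\to 0} r\tilde g = 0$, implicit in the Bondi-spherical symmetry hypothesis, gives \eqref{barg_terms_g}; and finally, rewriting \eqref{WaveEquation} using $\partial_r \tilde g = 2G$ (from \eqref{Einstein_theta_theta} and \eqref{defG}) and $\partial_r \phi = (h-\bar h)/r$ (from \eqref{Definitionh}) produces exactly \eqref{mainEqOriginal}. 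The three equivalent expressions \eqref{defG}--\eqref{defG3} for $G$ are then related by \eqref{meanDef}--\eqref{meanDif} together with \eqref{barg_terms_g}.

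For the converse, I would check directly that any quadruple $(h,g,\tilde g,\phi)$ satisfying \eqref{mainEqOriginal}, \eqref{barg_terms_g}, \eqref{Definitionh}, \eqref{g} and \eqref{defG} produces, by reading the derivations above in reverse, solutions of \eqref{Einstein_rr}, \eqref{Einstein_theta_theta} and \eqref{WaveEquation}; the remaining components of \eqref{fieldEq} then follow by the same Bianchi-identity argument used in the forward step.

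The main subtle point, and the only place where a genuine obstacle arises, is the systematic use of regularity conditions at $r=0$. At each integration step one must verify that the correct integration constant is selected so as to preserve regularity of $g$, $\tilde g$ and $\phi$ at the center; without this, distinct auxiliary pairs $(g,\tilde g)$ would satisfy \eqref{Einstein_rr}--\eqref{Einstein_theta_theta} and the equivalence with the Einstein system would fail. In particular, the propagation-of-constraints argument that reduces the full Einstein system to the three listed equations hinges on $g(u,0)=1$ and $\lim_{r\to 0}r\tilde g = \lim_{r\to 0} r\phi = 0$ being preserved under the evolution; beyond these regularity bookkeeping issues, all remaining steps are algebraic manipulations of the type already performed in the text preceding the statement.
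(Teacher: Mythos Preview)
Your proposal is correct and follows essentially the same route as the paper: the proposition is stated immediately after the text ``Thus we have derived the following,'' so its proof \emph{is} the chain of computations in Section~2.2 that you outline---integrating \eqref{Einstein_rr} and \eqref{Einstein_theta_theta} with the stated boundary conditions at $r=0$, introducing $h=\partial_r(r\phi)$, and rewriting \eqref{WaveEquation} as \eqref{mainEqOriginal}---together with the reference to \cite{Christodoulou:1986} for the reduction of the full system to the three scalar equations. If anything, you are slightly more explicit than the paper in addressing the converse direction and in flagging the role of the regularity conditions at the center, which the paper leaves implicit.
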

We will also need an evolution equation for $\partial_rh$ given a sufficiently regular solution of~\eqref{mainEqOriginal}: using
\[
\left[D,\partial_{r}\right]=G\partial_{r}\;,
\]
differentiating~\eqref{mainEqOriginal}, and assuming that we are allowed to commute partial derivatives, we obtain
\begin{equation}
\begin{aligned}
 D\partial_{r}h-2G\partial_{r}h&=-J\,\partial_{r}\bar{h}\;,
\end{aligned}
\label{D_partial_h}
\end{equation}
where
\begin{eqnarray}
\label{defJ}
J:&=&G-r\partial_rG
\\
\label{defJ2}
&=&3G+\Lambda g r+(\Lambda r^{2}-1)\frac{1}{2}\frac{\partial g}{\partial r}\;.
\end{eqnarray}

\section{The mass equation}\label{sec:Bondimass}

Consider a Bondi-spherically symmetric ${\mathcal C}^k$ solution of \eqref{EFEs} on a domain $(u,r) \in [0,U) \times [0,R]$ (with $R > \sqrt{3/\Lambda}$). From equations \eqref{Einstein_theta_theta} and \eqref{g} it is clear that $r\tilde{g}$ is increasing in $r$ for $r < \sqrt{1/\Lambda}$ and decreasing for $r > \sqrt{1/\Lambda}$. On the other hand, equation \eqref{barg_terms_g} implies that $\tilde{g}(u,r)$ approaches $-\infty$ as $r \to +\infty$. Therefore there exists a unique $r=r_c(u)> \sqrt{1/\Lambda}$ where $\tilde{g}(u,r)$ vanishes. This defines precisely the set of points where $\frac{\partial}{\partial u}$ is null, and hence the curve $r=r_c(u)$ determines an apparent (cosmological) horizon. Since $g$ is increasing in $r$, we have from \eqref{barg_terms_g}
\[
\tilde{g}(u,r) \leq g(u,\sqrt{1/\Lambda}) \frac1r \int_0^r \left(1 - \Lambda s^2\right) ds = g(u,\sqrt{1/\Lambda}) \left(1 - \frac{\Lambda r^2}3 \right).
\]
Therefore the radius of the apparent cosmological horizon is bounded by
\[
\sqrt{\frac{1}{\Lambda}}<r_{c}(u)\leq\sqrt{\frac{3}{\Lambda}}
\]
for all $u$. From \eqref{Einstein_theta_theta} it is then clear that $\frac{\partial \tilde{g}}{\partial r} < 0$ for $r=r_c(u)$, and so by the implicit function theorem the function $r_c(u)$ is ${\mathcal C}^k$. From the $uu$ component of \eqref{fieldEq} (equation \eqref{Einstein_uu} in the Appendix), we obtain
\[
\frac{g}{r}\frac{\partial}{\partial u}\left(\frac{\tilde{g}}{g}\right)=\kappa (\partial_{u}\phi)^{2}
\]
when $\tilde{g}=0$, showing that $\frac{\tilde{g}}{g}$ is nondecreasing in $u$, and so $r_c(u)$ must also be nondecreasing. In particular the limit
\[
r_1:= \lim_{u\to U}r_{c}(u)
\]
 exists, and $\sqrt{1/\Lambda}<r_{1}\leq\sqrt{3/\Lambda}$.

We introduce the renormalized Hawking mass function\footnote{This function is also known as the ``generalized Misner-Sharp mass".} \cite{Nakao:1995, Maeda:2007}
\begin{equation}
m(u,r)=\frac{r}{2}\left(1-\frac{\tilde{g}}{g}-\frac{\Lambda}{3}r^{2}\right),
\label{massfunction}
\end{equation}
which measures the mass contained within the sphere of radius $r$ at retarded time $u$, renormalized so as to remove the contribution of the cosmological constant and make it coincide with the mass parameter in the case of the Schwarzschild-de Sitter spacetime. This function is zero at $r=0$, and from \eqref{Einstein_rr}, \eqref{Einstein_theta_theta} we obtain
\[
\frac{\partial m}{\partial r} = \frac{\kappa r^2 \tilde{g}}{4g} (\partial_{u}\phi)^{2},
\]
implying that $m(u,r)\geq 0$ for $r \leq r_{c}(u)$. We have
\[
m(u,r_{c}(u)) = \frac{r_c(u)}{2}\left(1-\frac{\Lambda}{3}{r_c(u)}^{2}\right),
\]
whence
\[
\frac{d}{du} m(u,r_{c}(u)) = \frac{\dot{r}_c(u)}{2}\left(1-\Lambda{r_c(u)}^{2}\right) \leq 0,
\]
and so $m(u,r_{c}(u))$ is a nonincreasing function of $u$. Therefore the limit
\[
M_1:=\lim_{u\to U}m(u,r_{c}(u))=\frac{r_{1}}{2}\left(1-\frac{\Lambda}{3}r^{2}_{1}\right)
\]
exists, and from $\sqrt{1/\Lambda}<r_{1}\leq\sqrt{3/\Lambda}$ we have $0 \leq M_{1}<1/\sqrt{9\Lambda}$. We call this limit the {\em final Bondi mass}. Note that, unlike the usual definition in the asymptotically flat case, where the limit is taken at $r=+\infty$, here we take the limit along the apparent cosmological horizon; the reason for doing this is that $r \leq R$ in our case.


\section{Basic Estimates}

\newcommand{\our}{\mathcal{C}^{0}_{U,R}}
\newcommand{\Xour}{X_{U,R}}

Given $U,R>0$, let $\mathcal{C}^{0}_{U,R}$ denote the Banach space
$\left(\mathcal{C}^{0}([0,U]\times[0,R]),\|\cdot\|_{\mathcal{C}^{0}_{U,R}}\right)$, where
\begin{equation*}
 \left\|f\right\|_{\mathcal{C}^{0}_{U,R}}:=\sup_{(u,r)\in[0,U]\times[0,R]}\left|f(u,r)\right|,
\end{equation*}
and let $X_{U,R}$ denote the Banach space of functions
which are continuous and have continuous partial derivative with respect to $r$, normed by
\begin{equation*}
 \|f\|_{X_{U,R}}:=\|f\|_{\our}+\|\partial_r f\|_{\our}\;.
\end{equation*}
For functions defined on $[0,R]$ we will denote $C^0([0,R])$ by $C^0_R$, $C^1([0,R])$ by $X_R$, and will also use these notations for the corresponding norms.

\vspace{0.2cm}

For $h\in\our$ we have
\[
  \left|\bar{h}(u,r)\right|
  \leq\frac{1}{r}\int^{r}_{0}\left|h(u,s)\right|ds\leq\frac{1}{r}\int^{r}_{0}\|h\|_{\mathcal{C}^{0}_{U,R}}\,ds
  =\|h\|_{\mathcal{C}^{0}_{U,R}}
\]
and if $h \in\Xour$ we can estimate
\begin{equation}
\begin{aligned}
\left|(h-\bar{h})(u,r)\right|&=\left|\frac{1}{r}\int^{r}_{0}\left(h(u,r)-h(u,s)\right)ds\right|=\left|\frac{1}{r}\int^{r}_{0}\int^{r}_{s}\frac{\partial h}{\partial\rho}(u,\rho)d\rho\, ds\right| \\
&\leq\frac{1}{r}\int^{r}_{0}\int^{r}_{s}\|\partial_{r}h\|_{\mathcal{C}^{0}_{U,R}}\,d\rho\,ds=\frac{r}{2}\,\|\partial_{r}h\|_{\mathcal{C}^{0}_{U,R}}\;.
\end{aligned}
\label{Est_h_bar_h}
\end{equation}
Thus
\begin{equation*}
 \frac{\kappa}{2}\int_0^{R}\frac{\left(h-\bar{h}\right)^{2}}{r}dr\leq\frac{\kappa}{16} \|\partial_{r}h\|_{\our}^{2}R^{2}\,,
\end{equation*}
and by~\eqref{g} we get
\begin{equation}
\label{defK}
g(u,0)=1\leq g(u,r)\leq K:=\exp\left(\frac{\kappa}{16}\|\partial_{r}h\|_{\our}^{2}R^{2}\right).
\end{equation}

\subsection{The characteristics of the problem}

The integral curves of $D$, which are the incoming light rays, are the characteristics of the problem.
These satisfy the ordinary differential equation,
\begin{equation}
 \frac{dr}{du}=-\frac{1}{2}\tilde{g}(u,r).
\label{Characteristic_ODE}
\end{equation}
To simplify the notation we shall denote simply by $u\mapsto r(u)$, the solution to~\eqref{Characteristic_ODE},
satisfying $r(u_1)=r_1$.
However it should be always kept in mind that $r(u)=r(u;u_1,r_1)$. \\

\newcommand{\myK}{K}

Using~\eqref{defK} we can estimate $\tilde{g}$, given by~\eqref{barg_terms_g},  and consequently the solutions to the characteristic
equation~\eqref{Characteristic_ODE}: In fact, for $r\leq\frac{1}{\sqrt{\Lambda}}\Rightarrow1-\Lambda r^{2}\geq0$ we get
\begin{equation*}
\begin{aligned}
 \tilde{g}\geq\frac{1}{r}\int^{r}_{0}(1-\Lambda s^{2})ds=1-\frac{\Lambda}{3}r^{2}\geq  1-{\myK}\frac{\Lambda}{3}r^{2}\;.
\end{aligned}
\end{equation*}
For  $r\geq\frac{1}{\sqrt{\Lambda}}$ we have
\begin{equation*}
\begin{aligned}
\tilde{g}(u,r)
&\geq
\frac{1}{r}\int^{\frac{1}{\sqrt{\Lambda}}}_{0}\left(1-\Lambda s^{2}\right)ds+\frac{{\myK}}{r}\int^{r}_{\frac{1}{\sqrt{\Lambda}}}\left(1-\Lambda s^{2}\right)ds
 \\
&= \frac{2}{3\sqrt{\Lambda}r}\left(1-{\myK}\right)+{\myK}\left(1-\frac{\Lambda}{3}r^{2}\right)
\\
&\geq \frac{2}{3}\left(1-{\myK}\right)+{\myK}\left(1-\frac{\Lambda}{3}r^{2}\right)\;.
\end{aligned}
\end{equation*}
We then see that the following estimate holds for all $r\geq 0$:
\begin{equation}
\tilde{g}\geq 1-\frac{{\myK}\Lambda}{3}r^{2}\;.
\label{Est_tilde_g}
\end{equation}
The same kind of reasoning also provides the upper bound
\begin{equation}
\label{tildegUpper}
\tilde{g}\leq \myK-\frac{\Lambda}{3}r^{2}\;.
\end{equation}

From~\eqref{Characteristic_ODE} and~\eqref{Est_tilde_g} we now obtain the following differential inequality
\begin{equation}
\frac{dr}{du}\leq-\frac{1}{2}+\frac{\Lambda {\myK}}{6}r^2\;.
\label{Estimates_Characteristic_ODE}
\end{equation}
Denoting
$$
\alpha=\frac{1}{2}\sqrt{\frac{\Lambda {\myK}}{3}} \quad\text{ and }\quad r^{-}_{c}=\sqrt{\frac{3}{\Lambda {\myK}}} \;,
$$
where $r^{-}_{c}$ is the positive root of the polynomial in~\eqref{Est_tilde_g}, the solution $r^-(u)$ of the differential equation obtained from~\eqref{Estimates_Characteristic_ODE} (by replacing the inequality with an equality) satisfying $r^-(u_1)=r_1< r^{-}_{c}$ is given by
\[
r^-(u)=\frac{1}{2\alpha}\tanh{\left\{\alpha(c^--u)\right\}}\;,
\]
for some $c^{-}=c^{-}(u_1,r_1)$; by a basic comparison principle it then follows that whenever $r(u_1)=r_1< r^{-}_{c}$ we have
\begin{equation}
r(u)\geq\frac{1}{2\alpha}\tanh{\left\{\alpha(c^--u)\right\}}\;\;\;,\;\; \forall u\leq u_1\;.
\label{Est_char_loc}
\end{equation}

Denote the positive root of the polynomial in~\eqref{tildegUpper} by
$$r^{+}_{c}=\sqrt{\frac{3{\myK}}{\Lambda }}\;;$$
then, for appropriate choices (differing in each case) of $c^{-}=c^{-}(u_1,r_1)$ and $c^{+}=c^{+}(u_1,r_1)$, similar reasonings based on comparison principles give the following global estimates for the characteristics (see also Figure~\ref{Charact}):
\begin{itemize}
\item {\bf Local region} ($r_1<r^{-}_{c}$):
\end{itemize}
\begin{equation}
 \frac{1}{2\alpha}\tanh{\left\{\alpha(c^{-}-u)\right\}}\leq r(u)\leq\frac{{\myK}}{2\alpha}\tanh{\left\{\alpha(c^{+}-u)\right\}}\;\;\;,\;\; \forall u\leq u_1\;.
\label{charLoc}
\end{equation}
\begin{itemize}
 \item {\bf Intermediate region} ($r^{-}_{c}\leq r_1< r^{+}_{c}$):
\end{itemize}
\begin{equation}
\frac{1}{2\alpha}\coth{\left\{\alpha(c^{-}-u)\right\}}\leq r(u)\leq\frac{{\myK}}{2\alpha}\tanh{\left\{\alpha(c^{+}-u)\right\}}\;\;\;,\;\; \forall
u\leq u_1\;.
\label{charInt}
\end{equation}
\begin{itemize}
 \item {\bf Cosmological region} ($r\geq r^{+}_{c}$):
\end{itemize}
\begin{equation}
 \frac{1}{2\alpha}\coth{\left\{\alpha(c^{-}-u)\right\}}\leq r(u)\leq\frac{{\myK}}{2\alpha}\coth{\left\{\alpha(c^{+}-u)\right\}}\;\;\;,\;\; \forall u\leq u_1\;.
 \label{charCosm}
\end{equation}

In particular, for $r(u_1)=r_1\geq r^{-}_{c}$ we obtain
\begin{equation}
 r(u)\geq r^{-}_{c}>0\;\;\;,\;\; \forall u\leq u_1.
\label{Est_char_cosm}
\end{equation}

\begin{figure}[h!]
\begin{center}
\psfrag{u}{$u$}
\psfrag{r}{$r$}
\psfrag{rc+}{$r_c^+$}
\psfrag{rc-}{$r_c^-$}
\psfrag{(u,r)}{$(u_1,r_1)$}
\epsfxsize=.6\textwidth
\leavevmode
\epsfbox{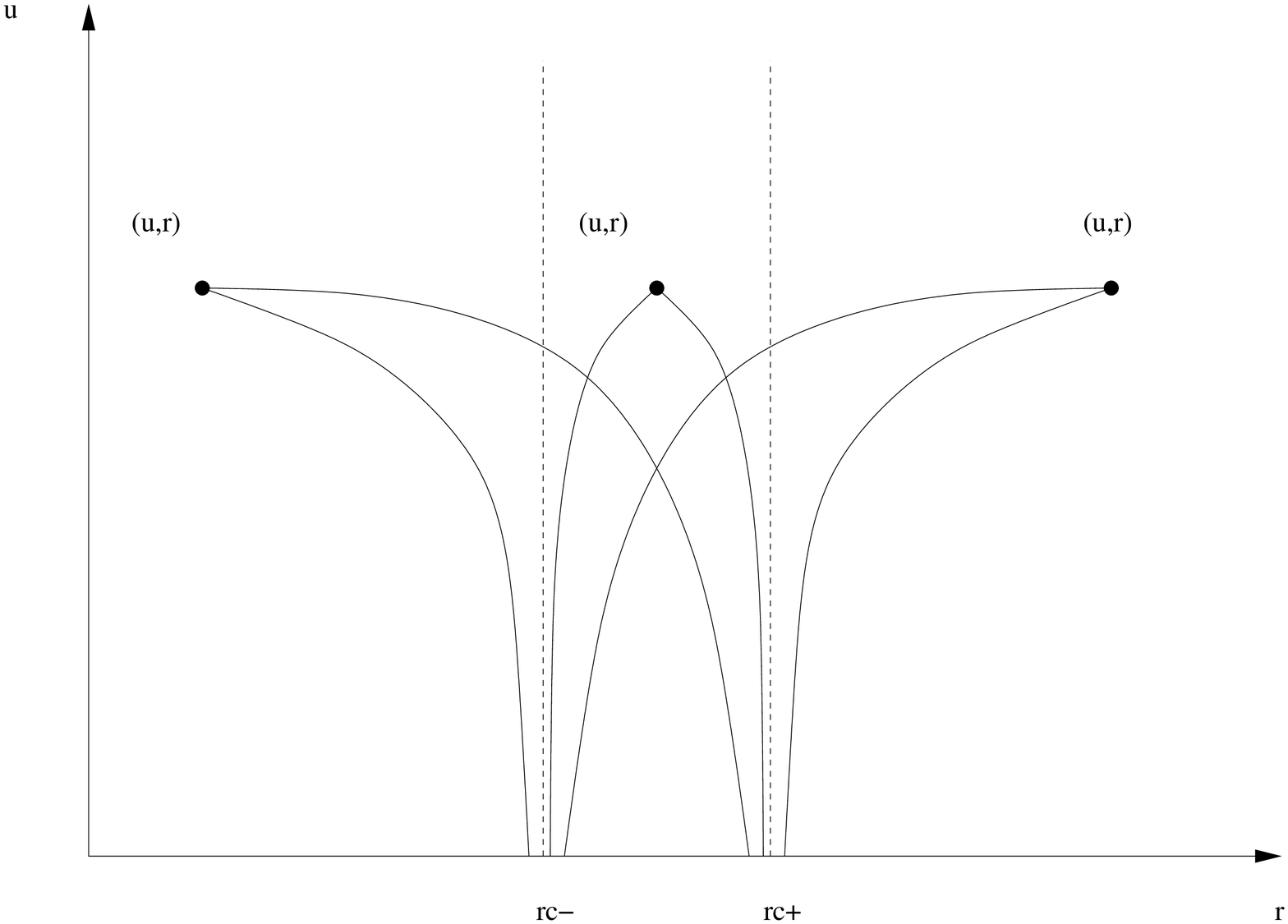}
\end{center}
\caption{Bounds for the characteristics through the point $(u_1,r_1)$ in the local ($r_1 < r_c^-$), intermediate ($r_c^- \leq r_1 < r_c^+$) and cosmological ($r_1 \geq r_c^+$) regions.} \label{Charact}
\end{figure}

\subsection{Lemma 1}

The purpose of this section is to prove the following lemma:

\begin{lem}
\label{Lemma1}
Let $\Lambda>0$ and $R>0$.
There exists $x^{*}=x^*(\Lambda,R)>0$ and constants $C_i=C_i(x^*,\Lambda,R)>0$, such that if $\left\|h\right\|_{X_{U,R}}\leq x^{*}$, then\footnote{As usual, $O(x^*)$ means a bounded function of $x^*$ times $x^*$ in some neighborhood of $x^*=0$.}
%
\begin{equation}
\label{Gbound1}
G<-C_1r\;,\;\; C_1=\frac{\Lambda}{3}+O(x^*)\;,
\end{equation}
\begin{equation}
\label{Gbound2}
|G|<C_2r\;, \\
\end{equation}
\begin{equation}
\label{boundJ}
|J|<C_3r\;, \;\; C_3=O(x^*),
\end{equation}
and, for any $u_1\geq0$ and $r_1\leq R$,
\begin{equation}
\label{boundInt}
\int^{u_1}_{0}\exp\left({\int^{u_1}_{u}2 G(v,r(v))dv}\right)du \leq C_4\,,
\end{equation}
where $r(u)=r(u;u_1,r_1)$ is the characteristic through $(u_1,r_1)$.
\end{lem}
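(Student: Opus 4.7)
The strategy is to treat the solution as a small perturbation of the de Sitter background ($g\equiv 1$), in which $G$ simplifies to $-\Lambda r/3$ and $J$ vanishes identically, and then to bound the deviations using $\|h\|_{X_{U,R}}\leq x^*$. The preparation is a collection of pointwise estimates for $g$: from~\eqref{defK} and $\|\partial_r h\|_{\our}\leq x^*$ one has $1\leq g\leq K=1+O((x^*)^{2})$; differentiating gives $\partial_r g=\tfrac{\kappa(h-\bar h)^{2}g}{2r}=O((x^*)^{2})r$ by~\eqref{Est_h_bar_h}; and finally, writing $g-\bar g=\tfrac{1}{r}\int_0^r\!\int_s^r\partial_\rho g\,d\rho\,ds$ (which is $\geq 0$ since $\partial_r g\geq 0$), one obtains $g-\bar g=O((x^*)^{2})r^{2}$.

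Substituting these into~\eqref{defG3}, the three summands of $G$ expand as $\tfrac{g-\bar g}{2r}=O((x^*)^2)r$, $\tfrac{\Lambda}{2r^2}\int_0^r g s^2\,ds=\tfrac{\Lambda r}{6}+O((x^*)^2)r$ and $-\tfrac{\Lambda g r}{2}=-\tfrac{\Lambda r}{2}+O((x^*)^2)r$; summing, $G=-\tfrac{\Lambda}{3}r+O((x^*)^2)r$, which for $x^*$ small enough yields both~\eqref{Gbound1} and~\eqref{Gbound2}. For~\eqref{boundJ} I insert the same estimates into~\eqref{defJ2}: the $-\Lambda r$ from $3G$ cancels exactly against the $+\Lambda r$ from $\Lambda g r$, the remaining $\tfrac{\Lambda r^{2}-1}{2}\partial_r g$ is itself $O((x^*)^2)r$, and so $J=O((x^*)^2)r$.

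The integral bound~\eqref{boundInt} demands more care. By~\eqref{Gbound1} it reduces to producing a lower bound $\int_u^{u_1}r(v)\,dv\geq \lambda(u_1-u)-\mu$ with $\lambda>0$ and $\mu\geq 0$ independent of $(u_1,r_1)$. If $r_1\geq r^{-}_{c}$ the estimate~\eqref{Est_char_cosm} forces $r(v)\geq r^{-}_{c}$ throughout $[0,u_1]$, and one can take $\lambda=r^{-}_{c}$, $\mu=0$. The delicate subcase is $r_1<r^{-}_{c}$: here the backward characteristic only asymptotically approaches $r^{-}_{c}$, and the naive pointwise bound $r(v)\geq r_1$ degenerates as $r_1\to 0$. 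I would instead use the sharper lower bound $r(v)\geq \tfrac{1}{2\alpha}\tanh(\alpha(c^{-}-v))$ from~\eqref{charLoc}, integrate it explicitly to $\int_u^{u_1}r(v)\,dv\geq \tfrac{1}{2\alpha^{2}}\log\tfrac{\cosh(\alpha(c^{-}-u))}{\cosh(\alpha(c^{-}-u_1))}$, and then invoke the elementary inequality $\cosh(\sigma+\tau)/\cosh\sigma\geq e^{\tau}/2$ (valid for $\sigma,\tau\geq 0$) with $\sigma=\alpha(c^{-}-u_1)\geq 0$ and $\tau=\alpha(u_1-u)\geq 0$. This produces $\lambda=1/(2\alpha)=r^{-}_{c}$ and $\mu=\log 2/(2\alpha^{2})$, whence $C_4\leq 2^{C_1/\alpha^{2}}\alpha/C_1$ uniformly in $u_1$ and $r_1$.

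The principal obstacle is exactly this $r_1<r^{-}_{c}$ subcase: even when the characteristic starts arbitrarily close to the centre, backward in time it is trapped below and driven asymptotically toward the cosmological horizon, and this trapping must be quantified sharply enough to defeat the smallness of $r(v)$ near $v=u_1$. The cosh-ratio inequality packages precisely this asymptotic behaviour, converting the transient short-time regime into a harmless additive constant.
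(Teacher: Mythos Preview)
Your proposal is correct and follows essentially the same route as the paper: the same pointwise estimates on $g$, $\partial_r g$, and $g-\bar g$ are plugged into~\eqref{defG3} and~\eqref{defJ2} to obtain~\eqref{Gbound1}--\eqref{boundJ}, and the integral bound~\eqref{boundInt} is handled via the same case split on $r_1$ versus $r_c^-$, the same explicit $\tanh$ lower bound on $r(v)$ in the local case, and the same $\cosh$-ratio inequality (your $\cosh(\sigma+\tau)/\cosh\sigma\geq e^{\tau}/2$ is the reciprocal form of the paper's $\cosh(\alpha(c^{-}-u_1))/\cosh(\alpha(c^{-}-u))\leq 2e^{\alpha(u-u_1)}$), arriving at the identical constant $2^{C_1/\alpha^2}\alpha/C_1$. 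Your packaging of the argument as ``find $\lambda,\mu$ with $\int_u^{u_1} r(v)\,dv\geq\lambda(u_1-u)-\mu$'' is a slightly cleaner presentation of the same computation.
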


\begin{Remark}
We stress the fact that while allowed to depend on $R$ the constants do not depend on any parameter associated with the $u$-coordinate.
\end{Remark}

\begin{proof}

We have, from~\eqref{defK},
\begin{equation}
\label{g*}
1\leq g \leq K^*:=\exp\left(\frac{\kappa}{16}(x^*)^2R^{2}\right) \;.
\end{equation}

Differentiating~\eqref{g} while using~\eqref{Est_h_bar_h} and \eqref{g*} leads to\footnote{From now on we will use the notation $f\lesssim g$ meaning that  $f\leq Cg$, for $C\geq0$ only allowed to depend on the fixed parameters $\Lambda$ and $R$.}
\begin{equation}
0\leq\frac{\partial g}{\partial r}\lesssim g\frac{\left(h-\bar{h}\right)^{2}}{r}\lesssim K^*(x^*)^2r,
\label{Est_partial_g}
\end{equation}
and consequently
\begin{equation*}
\begin{aligned}
0\leq\left(g-\bar{g}\right)(u,r)&= \frac{1}{r}\int^{r}_{0}\left(g(u,r)-g(u,s)\right)ds
\\
&=\frac{1}{r}\int^{r}_{0}\left\{\int^{r}_{s}\frac{\partial g}{\partial \rho}(u,\rho)d\rho\right\}ds
\\
&\lesssim \frac{1}{r}\int_0^{r}\int_s^{r}K^*(x^*)^2\rho \, d\rho \, ds
\\
&\lesssim K^*(x^*)^2r^{2}.
\end{aligned}
\end{equation*}

From this estimate,~\eqref{defG3} and~\eqref{g*} we see that
\begin{equation}
\label{Gbound0}
\left(\frac{\Lambda}{6}-\frac{\Lambda}{2}K^*\right)r \leq G\leq \left[K^*\left(C(x^*)^2+\frac{\Lambda}{6}\right)-\frac{\Lambda}{2}\right]r
\end{equation}
for some constant $C>0$ depending only on $\Lambda$ and $R$. Since $K^*\rightarrow 1$ as $x^*\rightarrow 0$,~\eqref{Gbound1} then follows by choosing $x^*$ appropriately small.
Also, inequality~\eqref{Gbound2} is immediate.

From~\eqref{defJ2},~\eqref{Est_partial_g} and~\eqref{Gbound0} we now obtain~\eqref{boundJ}.\\

 To prove~\eqref{boundInt} we start by using~\eqref{Gbound1} to obtain
\begin{equation*}
\int_0^{u_1}e^{\int_u^{u_1}2G(v,r(v))dv}du\leq\int_0^{u_1}e^{-2C_1\int^{u_1}_{u}r(v)dv}du\;.
\end{equation*}
If $r_1< r^{-}_{c}=\sqrt{\frac{3}{\Lambda {\myK}}}$ then~\eqref{Est_char_loc} holds and we then have
\begin{equation*}
\begin{aligned}
 -2C_1\int^{u_{1}}_{u}r(v)dv&\leq-\frac{C_1}{\alpha}\int^{u_{1}}_{u}\tanh{(\alpha({c^-}-v))}dv \\
                             &=\frac{C_1}{\alpha^2}\ln{\left(\frac{\cosh{(\alpha({c^-}-u_1))}}{\cosh{(\alpha ({c^-}-u))}}\right)}.
\end{aligned}
\end{equation*}
Since
\begin{equation*}
\frac{\cosh{(\alpha({c^-}-u_1))}}{\cosh{(\alpha({c^-}-u))}}\leq 2e^{\alpha (u-u_1)}
\end{equation*}
and
$$\frac{1}{2}\sqrt{\frac{\Lambda }{3}}\leq\alpha= \frac{1}{2}\sqrt{\frac{\Lambda {\myK}}{3}}\lesssim \sqrt{K^*} \;,$$
we obtain
\begin{equation}
\begin{aligned}
\label{intLocalRegion}
\int^{u_1}_{0}e^{-2{C_1}\int^{u_1}_{u}r(v)dv}du&\leq2^{{C_1}/{\alpha^2}}\int^{u_1}_{0}e^{\frac{C_1}{\alpha}(u-u_1)}du \\
                                                              &\leq 2^{{C_1}/{\alpha^2}}\frac{\alpha}{C_1}\left[1-e^{-\frac{C_1}{\alpha}u_1}\right]\leq 2^{{C_1}/{\alpha^2}}\frac{\alpha}{C_1}\leq C_4(x^*,\Lambda,R)\;,
\end{aligned}
\end{equation}
as desired.

If $r_1\geq r_c^-$, we have~\eqref{Est_char_cosm} which gives
$$\int_0^{u_1}e^{-2C_1\int^{u_1}_{u}r(v)dv}du\leq \int_0^{u_1}e^{-2C_1r_c^-(u_1-u)}du\leq \frac{1}{2C_1r_c^-}\left[1-e^{-2C_1r_c^-u_1}\right]\leq\frac{2\alpha}{C_1}\leq C_4(x^*,\Lambda,R)\;,$$
which completes the proof of the lemma.

\end{proof}


\section{Controlled local existence}

Local existence will be proven by constructing a contracting sequence of solutions to related linear problems. Given a sequence $\{h_n\}$ we will write $g_n:=g(h_n)$, $G_n:=G(h_n)$,  etc, for the quantities~\eqref{g},~\eqref{defG}, etc, obtained from $h_n$; for a given $h_n$ the corresponding differential operator will be denoted by
$$D_n=\partial_u-\frac{\tilde{g}_n}{2}\partial_r\,,$$
and the associated characteristic through $(u_1,r_1)$ by $\chi_n=\chi_n(u)=(u,r_n(u;u_1,r_1))$; as before, we will drop the explicit dependence on initial conditions when confusion is unlikely to arise.

With these notational issues settled we are ready to prove the following fundamental result:
\begin{lem}
\label{Lemma2}
Let $\Lambda>0$, $R>\sqrt{\frac{3}{\Lambda}}$ and $h_0\in \mathcal{C}^1([0,R])$.
There exists $x^*=x^*(\Lambda,R)>0$ and $C^*=C^*(x^*,\Lambda,R)>0$ such that if
\[
\|h_{0}\|_{X_{R}}\leq\frac{x^{*}}{1+C^{*}}\;,
\]
then the sequence $\{h_{n}\}_{n\in\mathbb{N}_0}$  defined by $h_0(u,r)=h_0(r)$ and
\begin{equation*}
 \left\{
\begin{array}{l}
 D_{n}h_{n+1}-G_{n}h_{n+1}=-G_{n}\bar{h}_{n}\\
 h_{n+1}(0,r)=h_{0}(r)\;,
\end{array}
\right.
\end{equation*}
is in $C^1([0,R]\times[0,U])$ and satisfies
\begin{eqnarray}
\label{G<0}
G_{n}&\leq&0\;,
\\
\label{h<h0}
 \|h_{n}\|_{C^{0}_{U,R}}&=&\|h_{0}\|_{C^{0}_{R}}\;,
\\
\label{X<X0}
 \|h_{n}\|_{X_{U,R}}&\leq&(1+C^{*})\|h_{0}\|_{X_{R}}\;,
\end{eqnarray}
for all $n\in\mathbb{N}_0$ and all $U\geq 0\;.$
\end{lem}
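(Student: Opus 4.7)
The plan is to proceed by induction on $n$. First choose $x^* = x^*(\Lambda,R) > 0$ small enough that Lemma~\ref{Lemma1} applies to any function with $X$-norm at most $x^*$, and then fix $C^* = C^*(x^*,\Lambda,R) > 0$ large enough to close the inductive estimates (determined below). The base case is immediate: $h_0(u,r) = h_0(r)$ trivially satisfies \eqref{h<h0} and \eqref{X<X0} with equality, and since $\|h_0\|_{X_R} \leq x^*$, Lemma~\ref{Lemma1} yields $G_0 < -C_1 r \leq 0$.

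For the inductive step, assume $h_n$ satisfies \eqref{G<0}--\eqref{X<X0}; in particular $\|h_n\|_{X_{U,R}} \leq (1+C^*)\|h_0\|_{X_R} \leq x^*$, so Lemma~\ref{Lemma1} applies to every quantity built from $h_n$. To construct $h_{n+1}$, I integrate the linear transport equation along the characteristics $\chi_n(u) = (u,r_n(u;u_1,r_1))$ of $D_n$, which are globally defined on $[0,u_1]$ and stay in $[0,R]$ by the analysis of Section~4.1 (this step crucially uses $R > \sqrt{3/\Lambda}$, so that no boundary data at $r=R$ is needed). The standard integrating factor yields the representation
\[
h_{n+1}(u_1,r_1) = e^{\int_0^{u_1} G_n\,dv}\,h_0(r_n(0)) - \int_0^{u_1} e^{\int_u^{u_1} G_n\,dv}\, G_n\, \bar h_n\,du\,,
\]
with $G_n, \bar h_n$ evaluated along $\chi_n$, and smooth dependence of the flow and of linear ODE solutions on parameters gives $h_{n+1}\in C^1$. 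Estimate \eqref{h<h0} then follows from $G_n\leq 0$, the inductive bound $\|\bar h_n\|_{C^0}\leq \|h_0\|_{C^0_R}$, and the telescoping identity $\int_0^{u_1}(-G_n)e^{\int_u^{u_1}G_n\,dv}\,du = 1 - e^{\int_0^{u_1}G_n\,dv}$, combined with $h_{n+1}(0,\cdot)=h_0$ for the opposite inequality.

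For \eqref{X<X0} I differentiate the equation in $r$; using $[D_n,\partial_r] = G_n\partial_r$ gives
\[
D_n\partial_r h_{n+1} - 2G_n\partial_r h_{n+1} = \partial_r G_n(h_{n+1}-\bar h_n) - G_n\partial_r \bar h_n\,.
\]
The crucial algebraic manoeuvre is the decomposition $h_{n+1}-\bar h_n = (h_{n+1}-h_n) + (h_n-\bar h_n)$ together with $\partial_r G_n = (G_n-J_n)/r$ and $\partial_r \bar h_n = (h_n-\bar h_n)/r$, which collapses the right-hand side to $\partial_r G_n\,(h_{n+1}-h_n) - J_n\,\partial_r \bar h_n$. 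Integrating along $\chi_n$ with integrating factor $e^{-\int 2G_n\,dv}$, invoking \eqref{boundInt}, the crude bound $|h_{n+1}-h_n|\leq 2\|h_0\|_{C^0_R}$, the estimate $|\partial_r \bar h_n|\leq \tfrac12\|\partial_r h_n\|_{C^0}$ from \eqref{Est_h_bar_h}, and Lemma~\ref{Lemma1} (in particular $|J_n|\leq C_3 r$ with $C_3=O(x^*)$), I obtain
\[
\|\partial_r h_{n+1}\|_{C^0_{U,R}} \leq \|\partial_r h_0\|_{C^0_R} + C_4\!\left[2(C_2+C_3) + \tfrac{C_3 R}{2}(1+C^*)\right]\|h_0\|_{X_R}\,.
\]
Combined with \eqref{h<h0}, this yields $\|h_{n+1}\|_{X_{U,R}} \leq (1+\tilde C)\|h_0\|_{X_R}$ with $\tilde C := C_4\bigl[2(C_2+C_3) + C_3 R(1+C^*)/2\bigr]$. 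The induction closes by choosing $C^* \geq \tilde C$, possible because $C_3 = O(x^*)$ makes $C_4 C_3 R/2 < 1$ for $x^*$ small; solving gives a finite $C^* = 2 C_4 C_2 + O(x^*)$. Finally, $\|h_{n+1}\|_{X_{U,R}} \leq x^*$ together with Lemma~\ref{Lemma1} supplies $G_{n+1} \leq 0$.

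The main obstacle I anticipate is precisely this cancellation. A naive bound for $\partial_r G_n(h_{n+1}-\bar h_n)$ leaves a coefficient of order $C_4 C_2 R \sim R\sqrt{\Lambda/3}$, which exceeds $1$ in the relevant regime $R > \sqrt{3/\Lambda}$ and prevents closure of the induction. Rewriting the source as $-J_n \partial_r \bar h_n$---mirroring the cancellation that produces the clean right-hand side of the exact nonlinear identity \eqref{D_partial_h}---exchanges the non-small $G_n$ factor for the $O(x^*)$-small $J_n$, which is exactly what lets the $(1+C^*)$-dependent contribution be absorbed for sufficiently small initial data.
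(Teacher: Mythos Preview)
Your proof is correct and follows the same induction/characteristic-integration architecture as the paper: the integral representation for $h_{n+1}$, the telescoping identity for the $C^0$ bound, and the rewriting of the $\partial_r$-source so that $J_n$ (which is $O(x^*)$) replaces $G_n$ are all exactly the paper's moves.

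The one place you diverge is in estimating $|J_n\,\partial_r\bar h_n|$. You use $|\partial_r\bar h_n|\leq\tfrac12\|\partial_r h_n\|_{C^0}\leq\tfrac12(1+C^*)\|h_0\|_{X_R}$, which feeds $C^*$ back into its own defining inequality and forces a bootstrap, closable only because $C_4C_3R/2<1$ for small $x^*$. The paper instead uses the $C^0$-based bound $|\partial_r\bar h_n|=|h_n-\bar h_n|/r\leq 2\|h_0\|_{C^0_R}/r$ (available from the already-proved \eqref{h<h0}), which against $|J_n|\leq C_3 r$ gives $|J_n\,\partial_r\bar h_n|\leq 2C_3\|h_0\|_{C^0_R}$ with no $C^*$-dependence; one can then simply set $C^*:=2(C_2+2C_3)C_4$. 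Your version is valid but slightly less clean and needs an extra smallness condition on $x^*$ that the paper's does not.
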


\begin{Remark}
We stress the fact that $C^*$ does not depend on either $U$ or $n$.
\end{Remark}

\begin{proof}

The proof is by induction. That the conclusions follow for the $0^{th}$ term is immediate, with~\eqref{G<0} obtained from Lemma~\ref{Lemma1} by setting $x^*$ accordingly small. Assume that $h_n$ satisfies all the conclusions of the lemma.
In particular, since we have $h_n\in C^1([0,U]\times[0,R])$ we see, from the respective definitions, that $\bar{h}_n$, $g_n$ and $\tilde g_n$ are $C^1$ for $r\neq 0$; regularity at the origin then follows by inserting the first order Taylor expansion in $r$ of $h_n$, centered at $r=0$, in the definitions of $\bar{h}_n$, then $g_n$ and finally $\tilde g_n$  . Later in the proof we will also need $\partial_rG_n$ to be well defined and continuous in the domain under consideration; this follows by using the previous referred expansions in equations~\eqref{defJ} and~\eqref{defJ2}.

Note that, as a consequence of the regularity for $\tilde{g}_n$, we also obtain well posedness and differentiability with respect to the initial datum $r_1$ for the characteristics given by~\eqref{Characteristic_ODE}; in particular we are allowed
to integrate the linear equation for $h_{n+1}$ along such characteristics to obtain
\begin{equation}
h_{n+1}(u_1,r_1)=h_{0}(r_n(0))e^{\int^{u_1}_{0}G_{n|_{\chi_{n}}}dv}
-\int^{u_1}_{0}\left(G_{n}\bar{h}_{n}\right)_{|_{\chi_{n}}}e^{\int^{u_1}_{u}G_{n|_{\chi_{n}}}dv}du\;.
\label{h_{n+1}_integral}
\end{equation}
This defines a function $h_{n+1}:{\mathcal R}_{n+1}\subset[0,U]\times[0,R]\rightarrow \mathbb{R}$
where
$${\mathcal R}_{n+1}=\{(u,r) \;|\; \chi_n(u)=(u,r_n(u))=(u,r)\text{ and }r_n(0)\in[0,R]\}\;.$$
Since the problem for the characteristics is well posed, there is a characteristic through every $(u,r)\in[0,U]\times[0,R]$; in particular ${\mathcal R}_{n+1}$ is non empty, but nonetheless, integrating backwards in $u$, the characteristics may leave the fixed rectangle before reaching $u=0$, which in turn would lead to ${\mathcal R}_{n+1}\neq[0,U]\times[0,R]$. We may rule out this undesirable possibility by a choice of appropriately small $x^*$; in fact, it suffices to guarantee that the $r_n$ component of all characteristics with sufficiently large initial datum $r_1$ are nondecreasing in $u$: given $R>\sqrt{\frac{3}{\Lambda}}$, since~\eqref{X<X0} and the smallness condition on the initial data imply
$$ \|h_{n}\|_{X_{U,R}} \leq (1+C^{*})\|h_{0}\|_{X_{R}}\leq x^*\;,$$
we see that (recall~\eqref{defK})
$$K_n \leq K^* = e^{C(x^*)^2R^2}\;,$$
and from the global characterization~\eqref{charLoc}-\eqref{charCosm}
the desired monotonicity property follows if
$$r^+_{c,n}=\sqrt{\frac{K_n\Lambda}{3}}<R\;,$$
which can be arranged by choosing $x^*$ sufficiently small (see also Figure~\ref{Charact}).



We have already showed that $\bar{h}_n$, $\chi_n$ and $G_n$ have continuous partial derivatives with respect to $r$; we are then allowed to differentiate~\eqref{h_{n+1}_integral} with respect to $r_1$ and, since  $D_nh_{n+1}$ is clearly continuous, we conclude that
$$h_{n+1}\in C^1([0,U]\times [0,R])\;.$$
From the previous discussion $\partial_rDh_{n+1}$ is continuous, so differentiating equation~\eqref{h_{n+1}_integral} with respect to $r$,
and using the fact that\footnote{Here we are using the following generalized version of the Schwarz Lemma:
if $X$ and $Y$ are two nonvanishing $\mathcal{C}^1$ vector fields in $\mathbb{R}^2$ and $f$ is a $\mathcal{C}^1$ function such
that $X\cdot(Y\cdot f)$ exists and is continuous then $Y\cdot(X\cdot f)$ also exists and is equal to $X\cdot(Y\cdot f)-[X,Y]\cdot f$.}
\begin{equation*}
 \left[D_{n},\partial_{r}\right]=G_{n}\partial_{r}
\end{equation*}
we obtain the following differential equation for $\partial_{r}h_{n+1}$ (recall~\eqref{defJ}):
\begin{equation*}
\begin{aligned}
 D_{n}(\partial_{r}h_{n+1})-2G_{n}\partial_{r}h_{n+1}&=\partial_rG_{n}(h_{n+1}-\bar{h}_n)-G_n\partial_r\bar{h}_n
  \\
                                                     &=-J_{n}\frac{\partial\bar{h}_{n}}{\partial r}-\left(J_{n}-G_{n}\right)\frac{(h_{n+1}-h_{n})}{r}\;.
\end{aligned}
\end{equation*}
Using the initial conditions
\[
\partial_{r}h_{n+1}(0,r)=\partial_{r}h_{0}(r)
\]
and integrating along the characteristics leads to
\begin{equation}
\label{drhn}
 \begin{aligned}
\partial_r h_{n+1}(u_{1},r_{1})
 &=
 \partial_r h_{0}(\chi_n(0))\,e^{\int^{u_{1}}_{0}2 G_{n|_{\chi_{n}}}dv}
 \\
 &-
 \int^{u_1}_{0}\left[J_{n}\partial_r\bar{h}_{n}+\left(J_{n}-G_{n}\right)\frac{(h_{n+1}-h_{n})}{r}\right]_{|_{\chi_{n}}}e^{\int^{u_1}_{u}2 G_{n|_{\chi_{n}}}dv}du\;.
\end{aligned}
\end{equation}

By the induction hypothesis we have $G_n\leq0$ and $\|\bar h_n\|_{\our}\leq \|h_n\|_{\our}\leq \|h_0\|_{\mathcal{C}^0_R} $; therefore
\begin{equation*}
\begin{aligned}
|h_{n+1}(u_1,r_1)|
&\leq
\|h_0\|_{\mathcal{C}^0_R}\;e^{\int_0^{u_1} G_n|_{\chi_n} dv}
+\|\bar{h}_{n}\|_{\our}\; \int_0^{u_1} -G_n|_{\chi_n} e^{\int_u^{u_1} G_n|_{\chi_n}dv} du
\\
&\leq \|h_{0}\|_{\mathcal{C}^{0}_R} \underbrace{\left(e^{\int_0^{u_1} G_n|_{\chi_n} dv}
- \int_0^{u_1} G_n|_{\chi_n}  e^{\int_u^{u_1}G_n|_{\chi_n}  dv} dv\right)}_{\equiv1} =\|h_{0}\|_{\mathcal{C}^{0}_R}\; .
\end{aligned}
\end{equation*}
Then
\begin{equation}
\label{maisUmaEq}
 |h_{n+1}-h_{n}|\leq2\|h_{0}\|_{C^{0}_{R}}\quad\text{and}\quad |\partial_{r}\bar{h}_{n}|=\frac{|h_{n}-\bar{h}_{n}|}{r}\leq \frac{2\|h_{0}\|_{C^{0}_{R}}}{r}\;,
\end{equation}
so that, relying once more on Lemma~\ref{Lemma1},
\begin{equation*}
\begin{aligned}
|\partial_{r}h_{n+1}(u_1,r_1)|
&\leq
\|\partial_{r}h_{0}\|_{_{\mathcal{C}^0_R}}e^{\int^{u_1}_{0}2G_n|_{\chi_n}dv}
\\
&+
2(C_2+2C_3)\|h_{0}\|_{C^{0}_{R}}\int^{u_1}_{0}e^{\int^{u_1}_{u}2G_n|_{\chi_n}dv}du
\\
&\leq \|\partial_{r}h_{0}\|_{_{\mathcal{C}^0_R}}+2(C_2+2C_3)C_4\|h_{0}\|_{C^{0}_{R}}\;.
\end{aligned}
\end{equation*}
Setting $C^*:=2(C_2+2C_3)C_4$ it now follows that
\begin{equation*}
\begin{aligned}
\|h_{n+1}\|_{X_{U,R}}
&=\|h_{n+1}\|_{\our}+\|\partial_rh_{n+1}\|_{\our}
\\
&\leq
(1+C^{*})\|h_{0}\|_{C^{0}_{R}}+\|\partial_{r}h_{0}\|_{C^{0}_{R}}
\\
&\leq
(1+C^{*})\|h_{0}\|_{X_{R}}\;.
\end{aligned}
\end{equation*}
Thus, if $\|h_{0}\|_{X_{R}}\leq\frac{x^{*}}{1+C^{*}}$, then
\[
\|h_{n+1}\|_{X_{U,R}}\leq x^{*}
\]
and by Lemma~\ref{Lemma1}
\[
G_{n+1}\leq0\;,
\]
which completes the proof.
\end{proof}

Lemma~\ref{Lemma2} will now allow us to establish a local existence theorem for small data, while controlling the previously defined supremum norms of the solutions in terms of initial data.

\begin{thm}
\label{thm1}
Let $\Lambda>0$, $R>\sqrt{\frac{3}{\Lambda}}$ and
 $h_{0}\in\mathcal{C}^{k}([0,R])$ for $k \geq 1$.
There exists $x^{*}=x^*(\Lambda,R)>0$ and $C^{*}(x^{*},\Lambda,R)>0$, such that, if $\|h_{0}\|_{X_{R}}\leq\frac{x^{*}}{1+C^{*}}$,  then the initial value problem
\begin{equation}
\label{mainEq0}
 \left\{
\begin{array}{l}
  Dh = G\left(h-\bar{h}\right) \\
  h(0,r)= h_0(r)
\end{array}
\right.
\end{equation}
has a unique solution $h\in\mathcal{C}^{k}([0,U]\times[0,R])$, for $U=U(x^*/(1+C^*);R,\Lambda)$ sufficiently small. Moreover,
\begin{equation}
\label{boundh1}
\left\|h\right\|_{\mathcal{C}^{0}_{U,R}}=\left\|h_{0}\right\|_{\mathcal{C}^{0}_{R}}
\end{equation}
and
\begin{equation}
\label{boundh2}
\left\|h\right\|_{X_{U,R}}\leq (1+C^*)\,\left\|h_{0}\right\|_{X_{R}}.
\end{equation}
\end{thm}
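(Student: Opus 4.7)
The plan is to exploit the uniform estimates of Lemma~\ref{Lemma2} by showing that the sequence $\{h_n\}$ is Cauchy in $\mathcal{C}^0_{U,R}$ once $U$ is chosen sufficiently small, identifying the limit as a solution of~\eqref{mainEq0} and transferring the uniform bounds to it; the passage from $\mathcal{C}^0$-convergence to convergence in $X_{U,R}$ would be handled by Arzel\`a-Ascoli applied to the equicontinuous family $\{\partial_r h_n\}$. Uniqueness should follow from a Gronwall-type repetition of the contraction estimate applied to the difference of two solutions, and higher regularity by a standard bootstrap.

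First I would subtract the defining equations for $h_{n+1}$ and $h_n$ to obtain, for $\delta_n:=h_{n+1}-h_n$, a transport equation of the schematic form
\[
(D_n-G_n)\delta_n = \tfrac12(\tilde g_n-\tilde g_{n-1})\partial_r h_n + (G_n-G_{n-1})(h_n-\bar h_{n-1}) - G_n(\bar h_n-\bar h_{n-1}).
\]
Using the explicit formulas~\eqref{g}, \eqref{barg_terms_g} and~\eqref{defG}, together with the uniform $X_{U,R}$-bound of Lemma~\ref{Lemma2} and the smallness parameter $x^*$, one checks that $\|\tilde g_n-\tilde g_{n-1}\|_{\mathcal{C}^0_{U,R}}$ and $\|G_n-G_{n-1}\|_{\mathcal{C}^0_{U,R}}$ are each controlled by a constant (depending only on $\Lambda,R$) times $x^*\,\|\delta_{n-1}\|_{\mathcal{C}^0_{U,R}}$, since $g$, $\tilde g$ and $G$ depend on $h$ only through quadratic, averaging and Lipschitz operations all of which are uniformly bounded. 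Integrating along the characteristics $\chi_n$, and using the coarse bound $\int_0^{u_1}|G_n\circ\chi_n|\,dv\leq C_2 R\,U$ stemming from~\eqref{Gbound2}, I arrive at
\[
\|\delta_n\|_{\mathcal{C}^0_{U,R}}\leq \Phi(U)\,\|\delta_{n-1}\|_{\mathcal{C}^0_{U,R}},\qquad \Phi(U)\to 0\text{ as }U\to 0.
\]
Choosing $U$ small makes $\Phi(U)<1$, giving the contraction and hence a limit $h\in\mathcal{C}^0_{U,R}$. The bound~\eqref{h<h0} transfers immediately to $h$, yielding~\eqref{boundh1}; extracting from~\eqref{drhn} a uniform modulus of continuity in $u$ for $\partial_r h_n$, and combining with the uniform bound on $\partial_r h_n$, Arzel\`a-Ascoli provides a subsequence with $\partial_r h_n\to \partial_r h$ uniformly, which delivers~\eqref{boundh2}. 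Convergence of $h_n$ in $\mathcal{C}^0$ propagates to $\bar h_n,g_n,\tilde g_n,G_n$ and, by continuous dependence of ODEs on coefficients, to $\chi_n$; passing to the limit in the integral identity~\eqref{h_{n+1}_integral} and then differentiating along $\chi$ shows that $h$ is a $\mathcal{C}^1$ solution of~\eqref{mainEq0}. Uniqueness follows by running the contraction estimate on the difference of two solutions with identical data over successive short subintervals covering $[0,U]$.

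For $k\geq 2$, I would proceed inductively: once $h\in\mathcal{C}^{k-1}$, the quantities $\tilde g$, $G$, $J$ and the characteristics inherit $\mathcal{C}^{k-1}$ regularity, and commuting the equation with $\partial_r$ (as already done in deriving~\eqref{D_partial_h}) produces linear transport equations for the higher derivatives which can be integrated along characteristics to yield $h\in\mathcal{C}^k$. The main obstacle is the contraction step above: the difference equation contains the term $\tfrac12(\tilde g_n-\tilde g_{n-1})\partial_r h_n$ where $\partial_r h_n$ is only uniformly bounded, not small; the resolution is that the $\mathcal{C}^0$-smallness of $\delta_{n-1}$ transfers to $\tilde g_n-\tilde g_{n-1}$ via the Lipschitz dependence of $\tilde g$ on $h$, and the extra factor $U$ produced by integration in $u$ supplies the remaining smallness. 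This mechanism, absent in the estimates of Lemma~\ref{Lemma2} (which are $U$-independent), is precisely why the present local-existence result must restrict to short times, in contrast with the controlled estimates of Lemma~\ref{Lemma2} that will later enable the global extension.
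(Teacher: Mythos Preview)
Your overall architecture matches the paper's: build the iterates of Lemma~\ref{Lemma2}, establish a contraction for small $U$, pass to the limit, then bootstrap uniqueness and higher regularity. Your $\mathcal{C}^0$-contraction via the difference PDE integrated along $\chi_n$ is correct, and in fact organizes the estimate a bit more cleanly than the paper, which subtracts the two instances of~\eqref{h_{n+1}_integral} and must therefore compare quantities composed with \emph{different} characteristics $\chi_n$ and $\chi_{n-1}$, leading it to derive the auxiliary estimate~\eqref{deltaChiFinal}; your formulation keeps everything on $\chi_n$ and avoids that detour.

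The genuine gap is your handling of $\partial_r h_n$. Arzel\`a--Ascoli on $[0,U]\times[0,R]$ requires equicontinuity in \emph{both} variables, but you only gesture at a modulus of continuity in $u$ (and even that is not substantiated: the dependence of~\eqref{drhn} on $u_1$ runs through the terminal point of $\chi_n$, which brings in $\partial_u\tilde g_n$ and hence $\partial_u h_n$, a quantity you do not control). Equicontinuity in $r$ would amount to a uniform bound on $\partial_r^2 h_n$, which is nowhere available. Nor can you simply pass to the limit in~\eqref{drhn} using only $\mathcal{C}^0$-convergence of $h_n$, since the integrand contains $\partial_r\bar h_n=(h_n-\bar h_n)/r$, whose uniform convergence on $[0,R]$ is not implied by $\|h_n-h\|_{\mathcal{C}^0}\to 0$ alone. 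The paper closes this gap by proving directly that $\{\partial_r h_n\}$ is Cauchy in $\mathcal{C}^0_{U,R}$: it subtracts consecutive instances of~\eqref{drhn} and estimates six resulting terms (controlling in particular $|J_{n|\chi_n}-J_{n-1|\chi_{n-1}}|$ via~\eqref{defJ2} and the Lipschitz estimates already obtained for $g_n$, $G_n$, $\partial_r g_n$), arriving at $\|\partial_r h_{n+1}-\partial_r h_n\|_{\mathcal{C}^0_{U,R}}\le CU\,\|h_n-h_{n-1}\|_{\mathcal{C}^0_{U,R}}$. This yields contraction of $\{h_n\}$ in the full $X_{U,R}$-norm, from which $h\in\mathcal{C}^1$ and the bound~\eqref{boundh2} follow immediately. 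You should replace the Arzel\`a--Ascoli step by this direct estimate.
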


\begin{proof}

Fix $x^*$ as in Lemma~\ref{Lemma1} and consider a sequence $\{h_n\}$ as defined in Lemma~\ref{Lemma2}, with $U< 1$. From~\eqref{Est_h_bar_h} and Lemma~\ref{Lemma2} we have
\begin{equation*}
\begin{aligned}
 \left|(h_{n}-\bar{h}_{n})+(h_{n-1}-\bar{h}_{n-1})\right|&\leq
\frac{r}{2}\left(\left\|\partial_{r}h_{n}\right\|_{\mathcal{C}^{0}_{U,R}}+\left\|\partial_{r}h_{n-1}\right\|_{\mathcal{C}^{0}_{U,R}}\right)
\\
&\leq (1+C^{*})\,r\,\|h_{0}\|_{X_R}\leq x^*r\;,
\end{aligned}
\end{equation*}
and
\begin{equation*}
  \left|(h_{n}-\bar{h}_{n})-({h}_{n-1}-\bar{h}_{n-1})\right|=\left|(h_{n}-h_{n-1})-\overline{({h}_{n}-{h}_{n-1})}\right|\leq 2\left\|h_{n}-h_{n-1}\right\|_{\mathcal{C}^{0}_{U,R}}
\end{equation*}
so that
\begin{equation}
\label{integranda}
\begin{aligned}
\left|(h_{n}-\bar{h}_{n})^2-(h_{n-1}-\bar{h}_{n-1})^2\right|
&\leq\left|(h_{n}-\bar{h}_{n})+(h_{n-1}-\bar{h}_{n-1})\right|\,\left|(h_{n}-\bar{h}_{n})-({h}_{n-1}-\bar{h}_{n-1})\right|
\\
&\leq 2x^*\,r\left\|h_{n}-h_{n-1}\right\|_{\mathcal{C}^{0}_{U,R}}
\\
&= C\,r \left\|h_{n}-h_{n-1}\right\|_{\mathcal{C}^{0}_{U,R}} \;
\end{aligned}
\end{equation}
(we will, until the end of this proof, allow the constants to depend on $x^*$, besides the fixed parameters $\Lambda$ and $R$).
The mean value theorem yields the following elementary inequality
\begin{equation}
\label{expIneq}
|e^x-e^y|\leq \max\{e^x,e^y\}|x-y|\;,
\end{equation}
from which (recall~\eqref{g*})
\begin{equation}
\label{deltag}
\begin{aligned}
|g_{n}-g_{n-1}|
&=
\left|\exp\left(C\int_0^{r}\frac{(h_{n}-\bar{h}_{n})^2}{s}\right)-\exp\left(C\int_0^{r}\frac{(h_{n-1}-\bar{h}_{n-1})^2}{s}\right)\right|
\\
&\lesssim
K^*\int^{r}_{0}\frac{\left|(h_{n}-\bar{h}_{n})^2-(h_{n-1}-\bar{h}_{n-1})^2\right|}{s}ds
\\
&\leq
C\,r\,\left\|h_{n}-h_{n-1}\right\|_{\mathcal{C}^{0}_{U,R}}\,.
\end{aligned}
\end{equation}
Then
\begin{equation}
\label{deltaTildeg}
\begin{aligned}
|\tilde{g}_{n}-\tilde{g}_{n-1}|
&=
\left|\frac{1}{r}\int^{r}_{0}(g_{n}-g_{n-1})(1-\Lambda s^2)ds\right|
\\
&\leq
C \,r\,\left\|h_{n}-h_{n-1}\right\|_{\mathcal{C}^{0}_{U,R}}\,,
\end{aligned}
\end{equation}
and using~\eqref{defG2},
\begin{equation*}
\begin{aligned}
|G_{n}-G_{n-1}|
&=
\frac{1}{2r}\left|(g_{n}-g_{n-1})(1-\Lambda r^2)-\frac{1}{r}\int^{r}_{0}(g_{n}-g_{n-1})(1-\Lambda s^2)ds\right|
\\
&\leq
C\,\left\|h_{n}-h_{n-1}\right\|_{\mathcal{C}^{0}_{U,R}}.
\end{aligned}
\end{equation*}
Note that, since $r\leq R$, the $r$ factors in the previous estimates may be absorbed by the corresponding constants.

Until now we have been estimating the difference between consecutive terms of sequences with both terms evaluated at the same  point $(u,r)$, but we will also need to estimate differences between consecutive terms evaluated at the corresponding characteristics; more precisely, for a given sequence $f_n$ we will estimate
\begin{equation*}
\begin{aligned}
|f_{n|{\chi_n}}-f_{n-1|{\chi_{n-1}}}|
&=
|f_n(u,r_n(u))-f_{n-1}(u,r_{n-1}(u))|
\\
&\leq |f_n(u,r_n(u))-f_{n}(u,r_{n-1}(u))|+|f_n(u,r_{n-1}(u))-f_{n-1}(u,r_{n-1}(u))|\;.
\end{aligned}
\end{equation*}
If for the second term we have, as before, a uniform estimate of the form $C\|h_n-h_{n-1}\|_{\our}$, and for the first one of the form $C|r_n-r_{n-1}|$, then, by~\eqref{deltaChiFinal} below, we will obtain, since $u_1\leq U<1$,
\begin{equation}
\label{deltaFFinal}
|f_{n|{\chi_n}}-f_{n-1|{\chi_{n-1}}}|\leq C\|f_n-f_{n-1}\|_{\our}\;.
\end{equation}
Also, if $\|\partial_rf_n\|_{\our}\leq C$ then the desired
$$|f_n(u,r_2)-f_n(u,r_1)|\leq\left| \int_{r_1}^{r_2}\partial_rf_n(r)dr\right|\leq C|r_2-r_1|\;,$$
follows immediately. We have (see~\eqref{Est_h_bar_h})
$$|\partial_r\bar{h}_n|=\left|\frac{h_n-\bar{h}_n}{r}\right|\leq C\;,$$
and from\eqref{Est_partial_g}
$$|\partial_r g_n|\leq C r\;.$$
By Lemma~\ref{Lemma1} we have $\|G_n\|_{\our}\leq C$, which in view of \eqref{defG} is equivalent to $\|\partial_r\tilde g_n\|_{\our}\leq C$;  since~\eqref{defJ},~\eqref{defJ2} and~\eqref{Gbound2} together with the above bounds yield $\|\partial_rG_n\|_{\our}\leq C$, the desired estimates, of the form~\eqref{deltaFFinal}, follow for the sequences $h_n$, $\bar{h}_n$, $g_n$, $\tilde g_n$ and $G_n$ once we have proved~\eqref{deltaChiFinal}. To do this, start from equation~\eqref{Characteristic_ODE} for the characteristics to obtain
$$r_{n}(u)=r_n(u_1)+\frac{1}{2}\int_u^{u_1}\tilde g_{n}(s,r_n(s))ds\;,$$
so that the difference between two consecutive characteristics through $(u_1,r_1)$ satisfies
\begin{equation*}
\begin{aligned}
r_{n}(u)-r_{n-1}(u)
&= \frac{1}{2}\int_u^{u_1}\left\{\tilde g_{n}(s,r_n(s))-g_{n-1}(s,r_{n-1}(s))\right\}ds
\\
&=
\frac{1}{2}\int_u^{u_1}\left\{\tilde g_{n}(s,r_n(s))-\tilde g_{n}(s,r_{n-1}(s))\right\}ds
+
\frac{1}{2}\int_u^{u_1}\left\{\tilde g_{n}(s,r_{n-1}(s))-\tilde g_{n-1}(s,r_{n-1}(s))\right\}ds\;.
\end{aligned}
\end{equation*}
From the previously obtained bounds $\|\partial_r\tilde g_n\|_{\our}\leq C$ and~\eqref{deltaTildeg}, we then have
$$|r_{n}(u)-r_{n-1}(u)|\leq C\int_u^{u_1}|r_{n}(s)-r_{n-1}(s)|ds + C'(u_1-u)\|h_n-h_{n-1}\|_{\our}\;,$$
from which\footnote{Here we used the following comparison principle: if $y,z\in\mathcal{C}^0([t_0,t_1])$ satisfy $y(t)\leq f(t)+C\int_t^{t_1}y(s)ds$ and $z(t)= f(t)+C\int_t^{t_1}z(s)ds$, then $y(t)\leq z(t)$, $\forall t\in[t_0,t_1]$\;.}
\begin{equation}
\label{deltaChiFinal}
|r_{n}(u)-r_{n-1}(u)|\leq \frac{C'}{C}\|h_n-h_{n-1}\|_{\our}\left(e^{C(u_1-u)}-1\right)\;,
\end{equation}
as desired.

Now, from~\eqref{h_{n+1}_integral} and the elementary identity
$$a_2b_2c_2-a_1b_1c_1=(a_2-a_1)b_2c_2+(b_2-b_1)a_1c_2+(c_2-c_1)a_1b_1$$
%
we get
\begin{equation*}
\begin{aligned}
|(h_{n+1}-h_n)(u_1,r_1)|
&\leq
\|h_0\|_{\mathcal{C}^0_R}\underbrace{\left|\exp\left(\int_0^{u_1}G_{n|\chi_n}dv\right)-\exp\left(\int_0^{u_1}G_{n-1|\chi_{n-1}}dv\right)\right|}_{I}
\\
&+\underbrace{\int_0^{u_1}\left|G_{n|\chi_n}-G_{n-1|\chi_{n-1}}\right|\left|\bar{h}_{n|\chi_n}\right|\exp\left(\int_u^{u_1}G_{n|\chi_n}dv\right)du}_{II}
\\
&+\underbrace{\int_0^{u_1}\left|\bar h_{n|\chi_n}-\bar h_{n-1|\chi_{n-1}}\right|\left|G_{n-1|\chi_{n-1}}\right|\exp\left(\int_u^{u_1}G_{n|\chi_n}dv\right)du}_{III}
\\
&+\underbrace{\int_0^{u_1}\left|\exp\left(\int_u^{u_1}G_{n|\chi_n}dv\right)-\exp\left(\int_u^{u_1}G_{n-1|\chi_{n-1}}dv\right)\right|\left|G_{n-1|\chi_{n-1}}\bar{h}_{n-1|\chi_{n-1}}\right|du}_{IV}\;.
\end{aligned}
\end{equation*}
Using~\eqref{G<0},~\eqref{expIneq} and~\eqref{deltaFFinal}, which holds for the sequence $G_n$ as discussed earlier, gives
$$I\leq \left|\left(\int_0^{u_1}G_{n|\chi_n}dv\right)-\left(\int_0^{u_1}G_{n-1|\chi_{n-1}}dv\right)\right|\leq C u_1\left\|h_{n}-h_{n-1}\right\|_{\mathcal{C}^{0}_{U,R}}\;,$$
and, in view also of~\eqref{h<h0},
$$II\leq \int_0^{u_1}C\|h_0\|_{\mathcal{C}^{0}_R} \left\|h_{n}-h_{n-1}\right\|_{\our}du\leq Cu_1\left\|h_{n}-h_{n-1}\right\|_{\our}\;.$$
In a similar way (recall that~\eqref{deltaFFinal} also holds for the sequence $\bar{h}_n$)
$$III\leq Cu_1\left\|h_{n}-h_{n-1}\right\|_{\mathcal{C}^{0}_{U,R}}\;,$$
and, using the bound for $I$,
$$IV\leq C u_1^2\left\|h_{n}-h_{n-1}\right\|_{\mathcal{C}^{0}_{U,R}}\;.$$
Putting all the pieces together yields (recall that we have imposed the restriction $u_1\leq U<1$)
\begin{equation}
\label{contractionI}
\|h_{n+1}-h_n\|_{\our}\leq C\,U\|h_{n}-h_{n-1}\|_{\our}.
\end{equation}

Now, applying the same strategy to~\eqref{drhn} leads to
\begin{equation*}
\begin{aligned}
|(\partial_rh_{n+1}-\partial_rh_n)(u_1,r_1)|
&\leq
\|\partial_rh_0\|_{\mathcal{C}^0_R}\underbrace{\left|\exp\left(\int_0^{u_1}2G_{n|\chi_n}dv\right)-\exp\left(\int_0^{u_1}2G_{n-1|\chi_{n-1}}dv\right)\right|}_{(i)}
\\
&+\underbrace{\int_0^{u_1}\left|J_{n|\chi_n}-J_{n-1|\chi_{n-1}}\right|\left|\partial_r\bar{h}_{n|\chi_n}\right|\exp\left(\int_u^{u_1}2G_{n|\chi_n}dv\right)du}_{(ii)}
\\
&+\underbrace{\int_0^{u_1}\left|\partial_r\bar h_{n|\chi_n}-\partial_r\bar h_{n-1|\chi_{n-1}}\right|\left|J_{n-1|\chi_{n-1}}\right|\exp\left(\int_u^{u_1}2G_{n|\chi_n}dv\right)du}_{(iii)}
\\
&+\underbrace{\int_0^{u_1}\left|\exp\left(\int_u^{u_1}2G_{n|\chi_n}dv\right)-\exp\left(\int_u^{u_1}2G_{n-1|\chi_{n-1}}dv\right)\right|\left|J_{n-1|\chi_{n-1}}\partial_r\bar{h}_{n-1|\chi_{n-1}}\right|du}_{(iv)}
\\
&+\underbrace{\int^{u_1}_{0}\left[\left|J_{n}-G_{n}\right|\frac{|h_{n+1}-h_{n}|}{r}\right]_{|\chi_n}e^{\int^{u_1}_{u}2 G_{n|\chi_n}dv}du}_{(v)}
\\
&+\underbrace{\int^{u_1}_{0}\left[\left|J_{n-1}-G_{n-1}\right|\frac{|h_{n}-h_{n-1}|}{r}\right]_{|\chi_{n-1}}e^{\int^{u_1}_{u}2 G_{n-1|\chi_{n-1}}dv}du}_{(vi)}\;.
\end{aligned}
\end{equation*}

We have
\begin{equation*}
\begin{aligned}
|\partial_r g_{n}-\partial_r g_{n-1}|
&\lesssim \left|g_n\frac{(h_{n}-\bar{h}_{n})^2}{r}-g_{n-1}\frac{(h_{n-1}-\bar{h}_{n-1})^2}{r}\right|
\\
&\lesssim
|g_n|\frac{\left|(h_{n}-\bar{h}_{n})^2-(h_{n-1}-\bar{h}_{n-1})^2\right|}{r}
+|g_n-g_{n-1}|\frac{(h_{n-1}-\bar{h}_{n-1})^2}{r}
\\
&\lesssim \|h_{n}-h_{n-1}\|_{\our}\;,
\end{aligned}
\end{equation*}
where we have used~\eqref{g*},~\eqref{integranda} and~\eqref{deltag}.
Similarly
\begin{equation*}
\begin{aligned}
|\partial_r g_{n}(u,r_2)-\partial_r g_{n}(u,r_1)|
&\lesssim
|g_n(u,r_2)| \left|\frac{(h_{n}-\bar{h}_{n})^2(u,r_2)}{r_2} - \frac{(h_{n}-\bar{h}_{n})^2(u,r_1)}{r_1} \right|+
\\
&+|g_n(u,r_2)-g_{n}(u,r_1)|\left|\frac{(h_{n-1}-\bar{h}_{n-1})^2(u,r_1)}{r_1}\right|
\\
&\lesssim |r_2-r_1|\;.
\end{aligned}
\end{equation*}
We conclude that~\eqref{deltaFFinal} holds for the sequence $\partial_rg_n$ and since it also holds for the sequences $g_n$ and $G_n$ we obtain from~\eqref{defJ2}
\[
\left|J_{n|\chi_{n}}-J_{n-1|\chi_{n-1}}\right|\leq C \|h_{n}-h_{n-1}\|_{\our}\;.
\]
As an immediate consequence one obtains for $(i)-(iv)$ estimates similar to the ones derived for $I-IV$ (recall~\eqref{meanDif},~\eqref{Est_h_bar_h} and~\eqref{boundJ}).

Using~\eqref{Gbound2} and~\eqref{boundJ} we also have
$$(vi)\leq CU \|h_{n}-h_{n-1}\|_{\our}\,, $$
and~\eqref{contractionI} provides
$$(v)\leq CU^2 \|h_{n}-h_{n-1}\|_{\our}\,. $$

We finally obtain
\begin{equation*}
\begin{aligned}
\|h_{n+1}-h_n\|_{\Xour}
&=\|h_{n+1}-h_n\|_{\our}+\|\partial_rh_{n+1}-\partial_rh_n\|_{\our}
\\
&\leq CU \|h_{n}-h_{n-1}\|_{\our}
\\
&\leq CU \|h_{n}-h_{n-1}\|_{\Xour}\;.
\end{aligned}
\end{equation*}
So, for  $U$ sufficiently small, $\{h_n\}$ contracts, and consequently converges, with respect to $\|\cdot\|_{\Xour}$. The previous estimates show that the convergence of $h_n$ lead to the uniform convergence of all the sequences appearing in~\eqref{h_{n+1}_integral} and~\eqref{drhn}. Taking the limit of~\eqref{h_{n+1}_integral} leads to
\begin{equation}
 h(u_1,r_1)=h_{0}(\chi(0))e^{\int_0^{u_1}G|_{\chi}dv}-\int_0^{u_1}\left(G\bar{h}\right)|_{\chi}e^{\int^{u_1}_{u}G|_{\chi}dv}du\,,
\label{solutionIntegral}
\end{equation}
where we denote the limiting functions by removing the indices. Equation~\eqref{solutionIntegral} shows that $h$ is a continuous solution to~\eqref{mainEq0}, the limit of~\eqref{drhn} shows that $\partial_rh$ solves~\eqref{D_partial_h} and is continuous, and we see that $h\in\mathcal{C}^1$, since $Dh$ is also clearly continuous.

Now let $1 \leq m < k$ be an integer, and assume that $h\in\mathcal{C}^m$. As in the proof of Lemma~\ref{Lemma2}, but using the Taylor expansion of order $m$, we can show that $\bar{h}$, $\tilde{g}$ (which controls the characteristics), $G$ and $\partial_rG$ are also $\mathcal{C}^m$, from which it follows that $\partial_r(G\bar{h})$ is $\mathcal{C}^m$. Taking the partial derivatives of~\eqref{solutionIntegral} as in~\cite{Costa:2012} (using the assumed regularity of the initial data) we then see that actually $h\in\mathcal{C}^{m+1}$, and so $h\in\mathcal{C}^{k}$.   

\vspace{0,5cm}

To establish uniqueness consider two solutions of~\eqref{mainEq0} and derive the following evolution equation for their difference:
\begin{equation}
\label{eqDiference}
D_1(h_2-h_1)-G_1(h_2-h_1)=\frac{1}{2}\left(\tilde g_2-\tilde g_1\right)\partial_rh_2+\left(G_2-G_1\right)\left({h}_2-\bar{h}_2\right)
-G_1\left(\bar{h}_2-\bar{h}_1\right)\;.
\end{equation}
Integrating it along the characteristics associated to $h_1$ yields
$$|(h_2-h_1)(u_1,r_1)|\leq\int_0^{u_1}\left[\frac{1}{2}\left|\tilde g_2-\tilde g_1\right||\partial_rh_2|+\left|G_2-G_1\right|\left|{h}_2-\bar{h}_2\right|
+|G_1|\left|\bar{h}_2-\bar{h}_1\right|\right]_{|\chi_1}e^{\int_u^{u_1}G_{1|\chi_1}dv}du\;.$$
Setting
$$\delta(u)=\|(h_2-h_1)(u,\cdot)\|_{{\mathcal C}^0_R}\;,$$
then, arguing as in the beginning of the proof of this theorem, we obtain, from the previous inequality,
$$\delta(u_1)\leq C\int_0^{u_1}\delta(u)e^{\int_u^{u_1}G_{1|\chi_1}dv}du\;.$$
Applying Gronwall's inequality we conclude that
$$\delta(u)\leq 0\;,$$
and uniqueness follows.
\vspace{0,5cm}

The estimates~\eqref{boundh1} and~\eqref{boundh2} are now an immediate consequence of Lemma~\ref{Lemma2}.
\end{proof}


%


\section{Global existence in time}

\begin{thm}
\label{thmGlobal}
Let $\Lambda>0$, $R>\sqrt{\frac{3}{\Lambda}}$ and
$h_{0}\in\mathcal{C}^{k}([0,R])$ for $k \geq 1$.
There exists $x^{*}=x^*(\Lambda,R)>0$ and $C^{*}(x^{*},\Lambda,R)>0$, such that, if $\|h_{0}\|_{X_{R}}\leq\frac{x^{*}}{(1+C^{*})^2}$,  then the initial value problem
\begin{equation}
\label{mainEq}
 \left\{
\begin{array}{l}
  Dh = G\left(h-\bar{h}\right) \\
  h(0,r)= h_0(r)
\end{array}
\right.
\end{equation}
has a unique solution $h\in\mathcal{C}^{k}([0,\infty]\times[0,R])$. Moreover,
\begin{equation}
\label{boundh3}
\|h\|_{{\mathcal C}^0([0,\infty)\times[0,R])} = \|h_{0}\|_{{\mathcal C}^0([0,R])}\;,
\end{equation}
and
\begin{equation}
\label{boundh4}
\|h\|_{{X}([0,\infty)\times[0,R])} \leq (1+C^*) \|h_{0}\|_{{X}([0,R])}\;.
\end{equation}
%
Also, solutions depend continuously on initial data in the following precise sense: if $h^1$ and $h^2$ are two solutions with initial data $h_0^1$ and $h_0^2$, respectively, then
\[
\|h^1-h^2\|_{\our} \leq C(U,R,\Lambda) \|h_0^1-h_0^2\|_{{\mathcal C}^0_R}\,,
\]
for all $U>0$\;.
\end{thm}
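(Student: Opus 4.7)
The plan is to combine Theorem~\ref{thm1} with a straightforward continuation argument. The extra factor $(1+C^{*})$ in the smallness hypothesis $\|h_0\|_{X_R}\leq x^{*}/(1+C^{*})^{2}$ is chosen precisely to absorb the factor $(1+C^{*})$ picked up by the local bound~\eqref{boundh2}, so that the $X_{U,R}$--norm of the solution always remains strictly below the threshold $x^{*}/(1+C^{*})$ required by Theorem~\ref{thm1}. The key structural feature that makes the continuation close is that every estimate in Lemma~\ref{Lemma1} and in the derivation of~\eqref{h_{n+1}_integral}--\eqref{drhn} is \emph{uniform in the final time}: all constants depend only on $(x^{*},\Lambda,R)$, and in particular the bound~\eqref{boundInt} on the characteristic integral does not degrade as the time of existence grows.

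Define $T^{*}\in(0,+\infty]$ as the supremum of $T>0$ for which a unique $\mathcal{C}^k$ solution of~\eqref{mainEq} exists on $[0,T]\times[0,R]$ and satisfies $\|h\|_{X_{T,R}}\leq x^{*}$. Theorem~\ref{thm1} gives $T^{*}>0$. The main step is to show that any such solution in fact obeys the sharper bounds $\|h\|_{\mathcal{C}^{0}_{T,R}}=\|h_0\|_{\mathcal{C}^{0}_R}$ and $\|h\|_{X_{T,R}}\leq(1+C^{*})\|h_0\|_{X_R}$. This is an a priori statement that does not rely on the iterative construction: since $\|h\|_{X_{T,R}}\leq x^{*}$ puts us in the regime of Lemma~\ref{Lemma1}, integrating~\eqref{mainEqOriginal} and~\eqref{D_partial_h} along the characteristics of $D$ from $u=0$ yields the representations~\eqref{h_{n+1}_integral} and~\eqref{drhn} (with the indices $n$ dropped), and inserting the bounds~\eqref{Gbound1}--\eqref{boundInt} reproduces the computation of Lemma~\ref{Lemma2} essentially verbatim. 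Under $\|h_0\|_{X_R}\leq x^{*}/(1+C^{*})^{2}$ this delivers $\|h\|_{X_{T,R}}\leq x^{*}/(1+C^{*})$, strictly below $x^{*}$. If $T^{*}<+\infty$, continuity yields $\|h(T^{*},\cdot)\|_{X_R}\leq x^{*}/(1+C^{*})$, so Theorem~\ref{thm1} applied with initial datum $h(T^{*},\cdot)$ produces a unique $\mathcal{C}^k$ extension on $[T^{*},T^{*}+U]\times[0,R]$ whose lifespan $U>0$ depends only on $(x^{*}/(1+C^{*}),R,\Lambda)$; uniqueness glues this extension to the solution on $[0,T^{*}]$. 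The glued solution satisfies $\|h\|_{X_{T^{*}+U,R}}\leq x^{*}$ (the new piece contributes at most $(1+C^{*})\cdot x^{*}/(1+C^{*})=x^{*}$), so the a priori estimate applies on the full interval and improves this to $\|h\|_{X_{T^{*}+U,R}}\leq x^{*}/(1+C^{*})<x^{*}$, contradicting the maximality of $T^{*}$. Hence $T^{*}=+\infty$, and~\eqref{boundh3}--\eqref{boundh4} are immediate from the global a priori bound.

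For continuous dependence, set $w=h^{2}-h^{1}$ and work from the difference equation~\eqref{eqDiference}, now keeping the initial contribution $(h_0^{2}-h_0^{1})(\chi_{1}(0))\exp\bigl(\int_0^{u_1}G_{1|\chi_1}\,dv\bigr)$. Integrating along the characteristics of $D_{1}$ and reusing the bookkeeping of the contraction step of Theorem~\ref{thm1}---which yields $|\tilde g_{2}-\tilde g_{1}|\lesssim r\|w\|_{\our}$, $|G_{2}-G_{1}|\lesssim\|w\|_{\our}$ and $|\bar h_{2}-\bar h_{1}|\leq\|w\|_{\our}$, together with $|\partial_{r}h^{2}|\lesssim 1$, $|h^{2}-\bar h^{2}|\lesssim r$ and $|G_{1}|\lesssim r$ coming from the already established global bounds on $h^{1},h^{2}$---leads to the integral inequality
\[
\|w(u_1,\cdot)\|_{\mathcal{C}^{0}_R}\leq\|h_0^{1}-h_0^{2}\|_{\mathcal{C}^{0}_R}+C(R,\Lambda)\int_0^{u_1}\|w(u,\cdot)\|_{\mathcal{C}^{0}_R}\,du,
\]
and Gronwall's inequality gives the required estimate with $C(U,R,\Lambda)=\exp\bigl(C(R,\Lambda)\,U\bigr)$. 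The main obstacle throughout the argument is keeping all constants uniform in $T$ and $U$---but this is exactly what is ensured by the careful analysis of the characteristics carried out in Section~4, in particular~\eqref{boundInt}, which controls every integral factor entering the estimates.
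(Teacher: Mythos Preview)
Your proof is correct and follows essentially the same approach as the paper: both arguments hinge on the observation that the bounds of Lemma~\ref{Lemma1} are uniform in $u$, so that integrating~\eqref{mainEqOriginal} and~\eqref{D_partial_h} directly for the solution (the identity~\eqref{drh} in the paper, which is~\eqref{drhn} with the $h_{n+1}-h_n$ term dropped) yields the a~priori bound $\|h\|_{X_{T,R}}\leq(1+C^{*})\|h_0\|_{X_R}\leq x^{*}/(1+C^{*})$, and Theorem~\ref{thm1} then extends the solution by a fixed amount at each step; the continuous dependence argument via~\eqref{eqDiference} and Gronwall is identical. The only cosmetic difference is that the paper extends explicitly in steps of size $U_2=U(x^{*}/(1+C^{*}))$, while you package this as a $T^{*}$-supremum argument---just be careful that your phrasing ``continuity yields $\|h(T^{*},\cdot)\|_{X_R}\leq x^{*}/(1+C^{*})$'' tacitly assumes the solution is already defined at $T^{*}$; the clean fix is to restart from some $T\in(T^{*}-U_2,T^{*})$ instead.
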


\begin{proof}
From Theorem~\ref{thm1} there exists a unique $h^1\in\mathcal{C}^k([0,U_{1}]\times[0,R])$ solving~\eqref{mainEq},
with existence time $U_1=U(x^*/(1+C^*)^2)$\;. Moreover
\[
\|h^{1}(U_1,\cdot)\|_{X_{R}}\leq \|h^{1}\|_{X_{U_1,R}} \leq (1+C^{*}) \|h_{0}\|_{X_R}\leq\frac{x^*}{1+C^*}\;.
\]
So Theorem~\ref{thm1} provides a solution $h^2\in\mathcal{C}^k([0,U_{2}]\times[0,R])$ with initial data $h^2(0,r) = h^{1}(U_1,r)$ and existence time
$U_2=U(x^*/(1+C^*))$.
Now,
\[
 h:[0,U_{1}+U_{2}]\times[0,R]\rightarrow\mathbb{R}
\]
defined by
\begin{equation*}
 h(u,r):=\left\{
\begin{array}{l}
  h^{1}(u,r)\quad,\quad u\in[0,U_1]  \\
  h^{2}(u,r)\quad,\quad u\in[U_1, U_{1}+U_{2}]\;.
\end{array}
\right.
\end{equation*}
is the unique solution of our problem in $\mathcal{C}^k([0,U_1+U_2]\times[0,R])$.
Since~\eqref{boundh1} applies to both $h^1$ and $h^2$ we see that:
\[
\|h^1\|_{\mathcal{C}^{0}_{U_1,R}}=\|h_{0}\|_{\mathcal{C}^{0}_{R}}\;,
\]
so that
\[
\|h^2\|_{\mathcal{C}^{0}_{U_2,R}}=\|h^1(U_1,\cdot)\|_{\mathcal{C}^{0}_{R}}\leq\|h_{0}\|_{\mathcal{C}^{0}_{R}}\;,
\]
and hence
\begin{equation}
\label{bound+1}
\|h\|_{\mathcal{C}^{0}_{U_1+U_2,R}}=\|h_{0}\|_{\mathcal{C}^{0}_{R}}\;.
\end{equation}
Arguing as in the proof of Lemma~\ref{Lemma2}, we see that $\partial_rDh$  is continuous and consequently $\partial_rh$ solves~\eqref{D_partial_h} so that:
\begin{equation}
\label{drh}
\partial_r h(u_{1},r_{1})=
\partial_r h_0(\chi(0))\,e^{\int_0^{u_{1}}2 G_{|\chi}dv}
-\int_0^{u_1}\left(J\partial_r\bar{h}\right)_{|\chi}e^{\int^{u_1}_{u}2 G_{|{\chi}}dv}du\;.
\end{equation}
Consequently,
\begin{equation*}
\begin{aligned}
|\partial_{r}h(u_1,r_1)|
&\leq
|\partial_{r}h_{0}(r_{0})|e^{\int^{u_{1}}_{0}2Gdv}+\int^{u_{1}}_{0}|J||\partial_r\bar{h}|e^{\int^{u_{1}}_{u}2Gdv}du
\\
&\leq \|\partial_rh_0\|_{_{\mathcal{C}^{0}_{R}}}+2C_3C_4\|h_0\|_{_{\mathcal{C}^{0}_{R}}}
\\
&\leq \|\partial_rh_0\|_{_{\mathcal{C}^{0}_{R}}}+C^*\|h_0\|_{_{\mathcal{C}^{0}_{R}}}\;,
\end{aligned}
\end{equation*}
where we have used an estimate analogous to~\eqref{maisUmaEq}, the fact that Lemma~\ref{Lemma1} applies to $h$ (with the same notation for the constants), and the fact that we may choose $C^*:=2(2C_2+C_3)C_4$, which can be traced back to the proof of Lemma~\ref{Lemma2}.

Combining the last two estimates with the smallness condition on the initial data leads to:
\begin{equation}
\label{criterioExtensao}
\|h\|_{X_{U_1+U_2,R}}\leq (1+C^*)\|h_0\|_{X_R}\leq \frac{x^*}{1+C^*}\;.
\end{equation}
So, by Theorem~\ref{thm1}, we can extend the solution by the same amount $U_2=U(x^*/(1+C^*))$ as before; the global (in time) existence then follows, with the bounds~\eqref{boundh3} and~\eqref{boundh4} a consequence of~\eqref{bound+1} and~\eqref{criterioExtensao}.

The continuous dependence statement follows by applying Gronwall's inequality to the integral inequality obtained integrating equation~\eqref{eqDiference} and using the estimates derived in the  beginning of the proof of Theorem~\ref{thm1}.

\end{proof}

\section{Exponential decay}

\begin{thm} \label{thmDecay}
Let $\Lambda>0$, $R>\sqrt{\frac{3}{\Lambda}}$ and set $H=2\sqrt{\frac{\Lambda}{3}}$. Then, for $\|h_0\|_{X_R}$ sufficiently small, the solution, $h\in {\mathcal C}^k([0,\infty]\times[0,R])$, of~\eqref{mainEq} satisfies
\[
\sup_{0\leq r\leq R}\left|\partial_rh(u,r)\right|\leq \hat{C} e^{-Hu}\;,
\]
and, consequently, there exits $\underline{h}\in\mathbb{R}$ such that
\[
|h(u,r)-\underline{h}|\leq \bar{C} e^{-Hu}\;,
\]
with constants $\hat{C}$ and $\bar{C}$ depending on $\|h_0\|_{X_R}$, $R$ and $\Lambda$\;.
\end{thm}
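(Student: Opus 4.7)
The natural quantity to control is the ``energy'' $E(u) := \sup_{0\le r \le R} |\partial_r h(u,r)|$; the plan is first to prove $E(u)\lesssim e^{-Hu}$, and then transfer the decay to $h$ itself. Differentiating~\eqref{mainEq} produces~\eqref{D_partial_h}, which upon integration along the characteristic $\chi$ through $(u_1,r_1)$ yields
\[
\partial_r h(u_1, r_1) = \partial_r h_0(\chi(0))\,e^{\int_0^{u_1} 2G|_\chi dv} - \int_0^{u_1} (J\,\partial_r \bar h)|_\chi\, e^{\int_u^{u_1} 2G|_\chi dv}\,du.
\]
Two pointwise bounds control the integrand: the identity $\partial_r\bar h = (h-\bar h)/r$ together with~\eqref{Est_h_bar_h} gives $|\partial_r \bar h|(u,r) \le \tfrac12 E(u)$, while Lemma~\ref{Lemma1} supplies $|J|\le C_3 r$ with $C_3 = O(x^\ast)$.

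The crux of the argument is a sharp characteristic estimate of the form
\[
\int_u^{u_1} 2G|_\chi\, dv \le -H(u_1-u) + K_0,
\]
uniform in $(u_1, r_1)\in[0,\infty)\times[0,R]$ and $u\le u_1$. I would establish this through the three-region analysis of Section~4: the bounds~\eqref{charLoc}--\eqref{charCosm} force $r_\chi(v)$ to sit within $O(e^{-\alpha(u_1-v)})$ of $r_c^- \approx \sqrt{3/\Lambda}$ for most of the interval, and combining this with $G \le -C_1 r$ from Lemma~\ref{Lemma1}, where $2C_1\sqrt{3/\Lambda}\to H$ as $x^\ast\to 0$, delivers the inequality (with the $K_0$ absorbing the endpoint transient where $\chi$ moves away from $r_c^-$). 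Inserting this into the integral representation gives
\[
E(u_1) \le C\, E(0)\, e^{-Hu_1} + C\, x^\ast \int_0^{u_1} E(u)\, e^{-H(u_1-u)}\,du,
\]
and a standard Gronwall argument applied to $F(u):=e^{Hu}E(u)$, with $x^\ast$ small enough to absorb the convolution factor, produces $E(u)\le \hat C e^{-Hu}$.

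To transfer the decay to $h$ I would evaluate the wave equation~\eqref{mainEqOriginal} at the center: since $G$ and $h-\bar h$ both vanish at $r=0$ while $\tilde g(u,0)=1$, the equation reduces there to $\partial_u h(u,0) = \tfrac12 \partial_r h(u,0)$, so $|\partial_u h(u,0)|\le \tfrac12 E(u)\lesssim e^{-Hu}$ is integrable and $h(u,0)\to \underline h$ at rate $e^{-Hu}$ for some $\underline h\in\mathbb{R}$; combining with the spatial control $|h(u,r)-h(u,0)|\le R\, E(u)\lesssim e^{-Hu}$ then yields the stated decay for $h$. The main technical obstacle is the uniform characteristic estimate of the second paragraph: the three regimes (local, intermediate, cosmological) must be treated separately, and the transient near the endpoint (where $r_\chi$ can lie well below $r_c^-$) must be shown to contribute only to the additive constant $K_0$ rather than to the exponential rate; any slack here degrades $H$, and indeed the smallness $C_3 = O(x^\ast)$ of the source term $J\,\partial_r\bar h$ is exactly what prevents the Gronwall factor from overwhelming the sharp de Sitter rate $H=2\sqrt{\Lambda/3}$.
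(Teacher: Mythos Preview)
Your overall architecture---integrate \eqref{D_partial_h} along characteristics, use $|\partial_r\bar h|\le \tfrac12 E(u)$ and $|J|\le C_3 r$, then close with Gronwall---matches the paper's. The gap is that your argument does not actually deliver the \emph{sharp} rate $H$. The characteristic bound you invoke gives at best
\[
\int_u^{u_1} 2G|_\chi\,dv \;\le\; -\frac{C_1}{\alpha}(u_1-u) + K_0,
\]
and $C_1/\alpha = 2\sqrt{\Lambda/(3K)} + O(x^*) = H - O(x^*)$, since $K\ge 1$ and $C_1\le \Lambda/3$ strictly whenever $x^*>0$. Moreover, your Gronwall step on $F(u)=e^{Hu}E(u)$ produces $F(u_1)\le CF(0)e^{Cx^*u_1}$, which is a further loss of $Cx^*$ in the exponent; the phrase ``$x^*$ small enough to absorb the convolution factor'' cannot be made literal, because $e^{Cx^*u_1}$ is unbounded in $u_1$ for any $x^*>0$. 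So your argument yields $E(u)\lesssim e^{-(H-O(x^*))u}$, not $e^{-Hu}$.

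The paper recovers the sharp rate by a bootstrap you are missing. One integrates \eqref{D_partial_h} not from $0$ but from a sliding base time $u_0$, observing that on $[u_0,\infty)\times[0,R]$ the constants of Lemma~\ref{Lemma1} may be taken with $x^*$ replaced by $E(u_0):=\sup_{u\ge u_0}\mathcal E(u)$. This yields $\mathcal E(u)\le C e^{-\hat H(u_0)u}$ for $u\ge u_0$, with $\hat H(u_0)=H+O(E(u_0))$. A first pass with $u_0=0$ gives some $\mathring H>0$, hence $E(u_0)\le Ce^{-\mathring H u_0}$ and $|H-\hat H(u_0)|\le Ce^{-\mathring H u_0}$. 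Choosing $u_0=u/2$ then gives
\[
e^{Hu}\mathcal E(u)\le C\,\exp\bigl((H-\hat H(u/2))u\bigr)\le C\,\exp\bigl(Ce^{-\mathring H u/2}\,u\bigr)\le \hat C,
\]
which is exactly the missing step. Your treatment of the second claim, via $\partial_u h(u,0)=\tfrac12\partial_r h(u,0)$ at the center together with $|h(u,r)-h(u,0)|\le R\,E(u)$, is fine.
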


\begin{proof}
Consider the solution provided by Theorem~\ref{thmGlobal}. Set
\begin{equation}
\label{energyDef}
{\mathcal E}(u):=\|\partial_rh(u,\cdot)\|_{{\mathcal C}^0_R}\;,
\end{equation}
and
\[
E(u_0):=\sup_{u\geq u_0} {\mathcal E}(u)\;.
\]
Arguing as in~\eqref{Est_h_bar_h} we get
\begin{equation}
\label{Est_h_bar_hV2}
|(h-\bar{h})(u,r)|\leq \frac{r}{2}{\mathcal E}(u)\;.
\end{equation}
Lemma~\ref{Lemma1} applies and note that, for a fixed $x_0\geq0$, the estimates~\eqref{Gbound1},~\eqref{Gbound2} and~\eqref{boundJ} are still valid, with $x^*$ replaced with $E(u_0)$, for the functions $G$ and $J$ restricted to $[u_0,\infty)\times [0,R]$. Integrating~\eqref{D_partial_h} with initial data on $u=u_0$ gives, for $u_1\geq u_0$ (compare with~\eqref{drh})
\[
{\mathcal E}(u_1)\leq {\mathcal E}(u_0)e^{-2C_1\int_{u_0}^{u_1}r(s)ds}+\frac{C_3R}{2}\int_{u_0}^{u_1} {\mathcal E}(u)e^{-2C_1\int_u^{u_1}r(v)dv}du\;,
\]
by using~\eqref{Est_h_bar_hV2} and~\eqref{Definitionh}; once again we have used the notation for the constants set by Lemma~\ref{Lemma1}.

Recall that $r(u)=r(u;u_1,r_1)$ and that if $r_1< r^{-}_{c}=\sqrt{\frac{3}{\Lambda {\myK}}}$ then, as in the calculations leading to~\eqref{intLocalRegion}, we have
\[
{\mathcal E}(u_1)\leq {\mathcal E}(u_0)2^{C_1/\alpha^2}e^{-\frac{C_1}{\alpha}{u_1}}+2^{C_1/\alpha^2-1}C_3R\int_{u_0}^{u_1} {\mathcal E}(u)e^{\frac{C_1}{\alpha}(u-u_1)}du\;,
\]
so that
\[
e^{\frac{C_1}{\alpha}{u_1}}{\mathcal E}(u_1)\leq 2^{C_1/\alpha^2}{\mathcal E}({u_0})+2^{C_1/\alpha^2-1}C_3R\int_{u_0}^{u_1} {\mathcal E}(u)e^{\frac{C_1}{\alpha}u}du\;.
\]
Applying Gronwall's Lemma to ${\mathcal F}(u_1):=e^{\frac{C_1}{\alpha}{u_1}}{\mathcal E}(u_1)$ then gives
\[
e^{\frac{C_1}{\alpha}{u_1}}{\mathcal E}(u_1)\leq 2^{C_1/\alpha^2}{\mathcal E}({u_0})\exp\left(2^{C_1/\alpha^2-1}C_3R (u_1-u_0)\right)\;,
\]
so that finally
\[
{\mathcal E}(u_1)\leq 2^{C_1/\alpha^2}{\mathcal E}({u_0})\exp\left\{\left(2^{C_1/\alpha^2-1}C_3R-\frac{C_1}{\alpha} \right)u_1\right\}\;.
\]
For $r_1\geq r^{-}_{c}$ we have~\eqref{Est_char_cosm} instead and a similar, although simpler, derivation yields
\[
{\mathcal E}(u_1)\leq {\mathcal E}({u_0})\exp\left\{\left(\frac{C_3R}{2}-{2C_1}r^{-}_{c} \right)u_1\right\}\;.
\]
Observe that $K=e^{O(E(u_0))}$, $C_1=\frac{\Lambda}{3}+O(E(u_0))$, $C_3=O(E(u_0))$, uniformly in $u_0$ since $u_0\mapsto E(u_0)$ is bounded. Using such boundedness once more,  we can encode the previous estimates into
\begin{equation}
\label{energyIneq5}
{\mathcal E}(u)\leq C e^{-\hat{H}(u_0)u}\;,
\end{equation}
with
\begin{equation}
\label{cH}
 \hat{H}(u_0)=H+O(E(u_0))\;.
\end{equation}
Since $E(u_0)$ is controlled by $\|h_0\|_{X_R}$ (see \eqref{boundh4}), choosing the later sufficiently small leads to 
%
%
%
%
\[
\hat{H}(u_0)\geq \mathring{H}>0\;,
\]
so that~\eqref{energyIneq5} implies
\begin{equation*}
\begin{aligned}
{\mathcal E}(u)
&\leq
C e^{-\mathring{H}u}
\end{aligned}
\end{equation*}
for $u \geq u_0$. Then clearly
$$
E(u_0)\leq C e^{-\mathring{H}u_0}\;,
$$
so that~\eqref{cH} becomes
$$|H-\hat{H}(u_0)|\leq C e^{-\mathring{H}u_0}\;. $$
Finally, setting $u_0=\frac{u}2$ yields
\begin{equation*}
\begin{aligned}
e^{Hu}{\mathcal E}(u)
&\leq
 C\exp(Hu-\hat{H}(u/2)u)
\\
&\leq
C\exp(Ce^{-\mathring{H}u/2}u)\leq \hat{C}\;,
\end{aligned}
\end{equation*}
as desired; the remaining claims follow as in \cite{Costa:2012}.
\end{proof}

It is now clear from \eqref{meanDef}, \eqref{g} and \eqref{barg_terms_g} that
\begin{align}
& |\bar{h}(u,r)-\underline{h}|\leq \bar{C} e^{-Hu}\;, \label{asympt1} \\
& |g-1|\leq \bar{C} e^{-Hu}\;, \label{asympt2} \\
& |\tilde{g}-1+\Lambda r^2 / 3|\leq \bar{C} e^{-Hu}\;. \label{asympt3}
\end{align}
In particular, \eqref{massfunction} implies that
\[
m(u) \leq \bar{C} e^{-Hu}\;,
\]
and so the final Bondi mass $M_1$ vanishes. Finally, geodesic completeness is easily obtained from \eqref{asympt1}-\eqref{asympt3}.

\section*{Acknowledgements}

We thank Pedro Gir\~ao, Marc Mars, Alan Rendall, Jorge Silva and Ra\"ul Vera for useful discussions. This work was supported by projects PTDC/MAT/108921/2008 and CERN/FP/116377/2010, and by CMAT, Universidade do Minho, and CAMSDG, Instituto Superior T\'ecnico, through FCT plurianual funding. AA thanks the Mathematics Department of Instituto Superior T\'ecnico (Lisbon), where this work was done, and the International Erwin Schr\"odinger Institute (Vienna), where the workshop ``Dynamics of General Relativity: Analytical and Numerical Approaches'' took place, for hospitality, and FCT for grant SFRH/BD/48658/2008.

\section*{Appendix: the Einstein equations}
For the metric \eqref{metricBondi} we have the nonvanishing
\begin{itemize}

 \item (inverse) metric components:

\begin{equation*}
\begin{aligned}
 &g_{uu}=-g\tilde{g}\;,\quad g_{ur}=g_{ru}=-g\;,\quad g_{\theta\theta}=r^{2}\;,\quad g_{\varphi\varphi}=r^{2}\sin^{2}{\theta}=\sin^{2}{\theta}g_{\theta\theta}\;, \\
 &g^{rr}=\frac{\tilde{g}}{g}\;,\quad g^{ru}=g^{ur}=-\frac{1}{g}\;,\quad g^{\theta\theta}=\frac{1}{r^{2}}\;,\quad g^{\varphi\varphi}=\frac{1}{r^{2}\sin^{2}{\theta}}=\frac{g^{\theta\theta}}{\sin^{2}{\theta}}\;;
\end{aligned}
\end{equation*}

\item Christoffel symbols $\Gamma^{\mu}_{\alpha\beta}=\frac{1}{2}g^{\mu\nu}\left(\partial_{\alpha}g_{\beta\nu}+\partial_{\beta}g_{\nu\alpha}-\partial_{\nu}g_{\alpha\beta}\right)$:

\begin{equation*}
\begin{aligned}
 &\Gamma^{u}_{uu}=\frac{1}{g}\left[\frac{\partial}{\partial u}g-\frac{1}{2}\frac{\partial}{\partial r}(g\tilde{g})\right]\;,\quad \Gamma^{u}_{\theta\theta}=\frac{r}{g}\;,\quad \Gamma^{u}_{\varphi\varphi}=\frac{r\sin^{2}{\theta}}{g}=\sin^{2}{\theta}\Gamma^{u}_{\theta\theta}\;, \\
 &\Gamma^{r}_{uu}=\frac{1}{2g}\frac{\partial}{\partial u}(g\tilde{g})-\frac{\tilde{g}}{g}\left[\frac{\partial}{\partial u}g-\frac{1}{2}\frac{\partial}{\partial r}(g\tilde{g})\right]\;,\quad \Gamma^{r}_{ur}=\frac{1}{2g}\frac{\partial}{\partial r}(g\tilde{g})\;,\quad \Gamma^{r}_{rr}=\frac{1}{g}\frac{\partial}{\partial r}g\;,  \\
 &\Gamma^{r}_{\theta\theta}=-\frac{\tilde{g}}{g}r\;,\quad  \Gamma^{r}_{\varphi\varphi}=-\frac{\tilde{g}}{g}r\sin^{2}{\theta}=\sin^{2}{\theta}\Gamma^{r}_{\theta\theta}\;,  \\
 &\Gamma^{\theta}_{\varphi\varphi}=-\sin{\theta}\cos{\theta}\;,\quad \Gamma^{\theta}_{\theta r}=\frac{1}{r}=\Gamma^{\varphi}_{\varphi r}\;,\quad \Gamma^{\varphi}_{\theta\varphi}=\frac{\cos{\theta}}{\sin{\theta}}=-\frac{1}{\sin^{2}{\theta}}\Gamma^{\theta}_{\varphi\varphi}\;;
\end{aligned}
\end{equation*}

\item Ricci tensor components $R_{\alpha\beta}=\partial_{\mu}\Gamma^{\mu}_{\alpha\beta}-\partial_{\alpha}\Gamma^{\mu}_{\mu\beta}+\Gamma^{\nu}_{\alpha\beta}\Gamma^{\mu}_{\nu\mu}-\Gamma^{\nu}_{\mu\beta}\Gamma^{\mu}_{\nu\alpha}$:

\begin{equation*}
 \begin{aligned}
  R_{uu}&=\partial_{r}\Gamma^{r}_{uu}-\partial_{u}\Gamma^{r}_{ur}+\Gamma^{r}_{uu}\left(\Gamma^{r}_{rr}+\Gamma^{\theta}_{\theta r}+\Gamma^{\varphi}_{\varphi r}\right)+\Gamma^{r}_{ru}\left(\Gamma^{u}_{uu}-\Gamma^{r}_{ru}\right) \\
       &=\frac{1}{2g}\left(\frac{\partial^{2}}{\partial r\partial u}(g\tilde{g})-\frac{\partial^{2}}{\partial u\partial r}(g\tilde{g})+\tilde{g}\frac{\partial^{2}}{\partial r^{2}}(g\tilde{g})-2\tilde{g}\frac{\partial^{2}}{\partial r\partial u}g\right)+\frac{1}{2g^{2}}\left(2\tilde{g}\frac{\partial g}{\partial r}\frac{\partial g}{\partial u}-\tilde{g}\frac{\partial g}{\partial r}\frac{\partial}{\partial r}(g\tilde{g})\right) \\
       &+\frac{1}{rg}\left(\frac{\partial}{\partial u}(g\tilde{g})+\tilde{g}\frac{\partial}{\partial r}(g\tilde{g})-2\tilde{g}\frac{\partial g}{\partial u}\right)\;, \\
  R_{ur}&=\partial_ {r}\Gamma^{r}_{ur}-\partial_{u}\Gamma^{r}_{rr}+\Gamma^{r}_{ur}\left(\Gamma^{\theta}_{\theta r}+\Gamma^{\varphi}_{\varphi r}\right) \\
       &=\frac{1}{2g}\left(\frac{\partial^{2}}{\partial r^{2}}(g\tilde{g})-2\frac{\partial^{2}}{\partial u\partial r}g\right)-\frac{1}{2g^{2}}\left(\frac{\partial g}{\partial r}\frac{\partial}{\partial r}(g\tilde{g})-2\frac{\partial g}{\partial u}\frac{\partial g}{\partial r}\right)+\frac{1}{rg}\frac{\partial}{\partial r}(g\tilde{g})\;, \\
 R_{rr}&=-\partial_{r}\left(\Gamma^{\theta}_{\theta r}+\Gamma^{\varphi}_{\varphi r}\right)+\Gamma^{r}_{rr}\left(\Gamma^{\theta}_{\theta r}+\Gamma^{\varphi}_{\varphi r}\right)-\Gamma^{\theta}_{\theta r}\Gamma^{\theta}_{\theta r}-\Gamma^{\varphi}_{\varphi r}\Gamma^{\varphi}_{\varphi r}\\
        &=\frac{2}{r}\frac{1}{g}\frac{\partial g}{\partial r}\;, \\
 R_{\theta\theta}&=\partial_{u}\Gamma^{u}_{\theta\theta}+\partial_{r}\Gamma^{r}_{\theta\theta}-\partial_{\theta}\Gamma^{\varphi}_{\varphi\theta}+\Gamma^{u}_{\theta\theta}\left(\Gamma^{r}_{ru}+\Gamma^{u}_{uu}\right)+\Gamma^{r}_{\theta\theta}\Gamma^{r}_{rr}-\Gamma^{\varphi}_{\varphi\theta}\Gamma^{\varphi}_{\varphi\theta} \\
		 &=-\frac{1}{g}\frac{\partial}{\partial r}(r\tilde{g})+1\;, \\
 R_{\varphi\varphi}&=\partial_{u}\Gamma^{u}_{\varphi\varphi}+\partial_{r}\Gamma^{r}_{\varphi\varphi}+\partial_{\theta}\Gamma^{\theta}_{\varphi\varphi}+\Gamma^{u}_{\varphi\varphi}\left(\Gamma^{u}_{uu}+\Gamma^{r}_{ur}\right)+\Gamma^{r}_{\varphi\varphi}\Gamma^{r}_{rr}-\Gamma^{\theta}_{\varphi\varphi}\Gamma^{\varphi}_{\theta\varphi} \\
                &=\sin^2\theta R_{\theta\theta} \;.
\end{aligned}
\end{equation*}

\end{itemize}

The Einstein field equations \eqref{fieldEq} then have the following nontrivial components:

\begin{equation}
\begin{aligned}
\frac{1}{2g}\left(\tilde{g}\frac{\partial^{2}}{\partial r^{2}}(g\tilde{g})-2\tilde{g}\frac{\partial^{2}}{\partial r\partial u}g\right)&+\frac{1}{2g^{2}}\left(2\tilde{g}\frac{\partial g}{\partial r}\frac{\partial g}{\partial u}-\tilde{g}\frac{\partial g}{\partial r}\frac{\partial}{\partial r}(g\tilde{g})\right) \\
&+\frac{g}{r}\frac{\partial}{\partial u}\left(\frac{\tilde{g}}{g}\right)+\frac{\tilde{g}}{rg}\frac{\partial}{\partial r}(g\tilde{g})=\kappa\left(\partial_{u}\phi\right)^{2}-\Lambda g\tilde{g}\;,
\label{Einstein_uu}
\end{aligned}
\end{equation}

\begin{equation}
\begin{aligned}
\frac{1}{2g}\left(\frac{\partial^{2}}{\partial r^{2}}(g\tilde{g})-2\frac{\partial^{2}}{\partial u\partial r}g\right)&-\frac{1}{2g^{2}}\left(\frac{\partial g}{\partial r}\frac{\partial}{\partial r}(g\tilde{g})-2\frac{\partial g}{\partial u}\frac{\partial g}{\partial r}\right) \\
&+\frac{1}{rg}\frac{\partial}{\partial r}(g\tilde{g})=\kappa\left(\partial_{u}\phi\right)\left(\partial_{r}\phi\right)-\Lambda g\;,
\end{aligned}
\end{equation}

\begin{equation}
\qquad\frac{2}{r}\frac{1}{g}\frac{\partial g}{\partial r}=\kappa\left(\partial_{r}\phi\right)^{2}\;,
\end{equation}

\begin{equation}
\frac{\partial}{\partial r}(r\tilde{g})=g\left(1-\Lambda r^{2}\right)\;.
\end{equation}

The wave equation \eqref{wave} reads
\begin{equation}
\begin{aligned}
& -\frac{2}{g}\left(\partial_{u}-\Gamma^{r}_{ru}\right)(\partial_{r}\phi)+\frac{\tilde{g}}{g}\left(\partial_{r}-\Gamma^{r}_{rr}\right)(\partial_{r}\phi)-\frac{2}{r^{2}}\Gamma^{r}_{\theta\theta}(\partial_{r}\phi)-\frac{2}{r^{2}}\Gamma^{u}_{\theta\theta}(\partial_{u}\phi)=0 \\
& \Leftrightarrow\quad\frac{1}{r}\left[\frac{\partial}{\partial u}-\frac{\tilde{g}}{2}\frac{\partial}{\partial r}\right]\frac{\partial}{\partial r}\left(r\phi\right)=\frac{1}{2}\left(\frac{\partial\tilde{g}}{\partial r}\right)\left(\frac{\partial\phi}{\partial r}\right)\;.
\end{aligned}
\end{equation}

\providecommand{\bysame}{\leavevmode\hbox to3em{\hrulefill}\thinspace}
\providecommand{\MR}{\relax\ifhmode\unskip\space\fi MR }
\providecommand{\MRhref}[2]{%
  \href{http://www.ams.org/mathscinet-getitem?mr=#1}{#2}
}
\providecommand{\href}[2]{#2}

\end{document}